\newtheorem{lemma}{Lemma}
\newtheorem{proposition}{Proposition}
\newcommand{\stkout}[1]{\ifmmode\text{\sout{\ensuremath{#1}}}\else\sout{#1}\fi}
\begin{document}

\title{A Markov Variation Approach to Smooth Graph Signal Interpolation}
\author{Ayelet Heimowitz and Yonina C. Eldar \thanks{Ayelet Heimowitz is with the Program in Applied and Computational Mathematics, Princeton University, Princeton, NJ.  Yonina  Eldar is with the Faculty of Math and computer science, Weizmann institute of Science, Rehovot, Israel. e-mail: ayeltg@gmail.com, yonina@weizmann.ac.il;}
\thanks{This project has received funding from the European Union's Horizon 2020 research and innovation program under grant agreement No. 646804-ERC- COG-BNYQ and from the Israel science foundation under grant No. 0100101.}
\thanks{This research was carried out while the first author was a postdoctoral  researcher at the Department of Electrical Engineering, Technion, Haifa,  Israel.}}
\date{}
\maketitle

\abstract

In this paper we present the Markov variation, a  smoothness measure which offers a probabilistic 
interpretation of graph signal smoothness.  This measure is then used to develop an optimization framework for 
graph signal interpolation. Our approach is based on diffusion embedding vectors and the connection between 
diffusion maps and signal processing on graphs. As diffusion embedding 
vectors may be expensive to compute for large graphs,  we present a computationally efficient method, based on the Nystr\"{o}m extension, 
for interpolation of signals over a graph.  
We demonstrate our approach on the MNIST dataset and a dataset of daily average temperatures around the US. 
We show that our method outperforms state of the art graph signal interpolation techniques on both datasets, 
and that our computationally efficient reconstruction achieves slightly reduced accuracy with a large computational speedup. 

{\normalfont
\section{Introduction}

With the advent of the world wide web and the move to cloud based computing, 
massive amounts of data have become increasingly available. 
The data may be collected from sources  
such as social networks, government 
agencies, commercial and academic bodies and more. Such data sets can include, 
for example, blogs, 
temperature measurements and information 
on customer preferences. 
Graphs are a popular model for the 
underlying geometry of data.   
Each data element (point) is  represented as a node, and the pairwise connections between the 
different points are modeled as edges. 

As an example, consider a data set of images of written digits, \textit{e.g.} the
MNIST data set \cite{lecun1998mnist}. Each data point is an image of a digit, and
is represented as a node. The similarity between two points 
(\textit{i.e.} two images of digits)
is expressed through the edge weights. In the context of
social networks, each user is a node in the graph, and the relationships between
users are modeled as edge weights \cite{hoff2001social}.
Such relationships may be for example friendship or collaboration.

Graph signals  are 
signals defined over irregular domains represented as weighted graphs \cite{shuman2013review, Sandryhaila2013shift, Sandryhaila2014filter, chen2015sampling, dong2016smooth}.  
The signal is defined as a mapping of each node in a graph to a scalar
\cite{Sandryhaila2013shift, Sandryhaila2014filter, chen2015sampling, sandryhaila2014bigdata}, and can be 
represented as a vector in  $\mathbb{R}^N$.  

In this paper we focus on a subclass of graph signals, namely smooth graph signals. 
Such signals are a mapping of each node  to a scalar (real or complex) such that the geometry of the 
graph is adhered to. 
A vector (signal) that obeys the graph geometry  will be smooth over the edges of the graph. 
This smoothness  is determined through a  measure which 
assigns  a numerical value detailing the change of the signal over the 
graph edges.  
Smoothness criteria have been discussed, for example, in \cite{dong2016smooth, zhou2004smooth}. 
Here we suggest a measure based on the Markov variation, which is a  probabilistic smoothness measure for graph signals. 
The probabilistic nature of our criterion is due to our use of the Markov matrix $\mathbf{P}$ to encode the geometry of the 
graph. The $ij$th entry of this matrix can be considered as the probability to transition from node $i$ to node $j$. 

Our  graph signal smoothness criterion is 
used  to suggest three methods  
for graph signal interpolation. Graph signal interpolation, or semi-supervised learning 
of graph signals, is the problem where a graph signal is known over a subset of nodes 
(the sampled nodes), and 
the goal is to recover the entire  signal from its samples.  The importance of this problem lies in 
the fact that for large graphs computing or measuring the entire signal may be very expensive. 

The first interpolation method we suggest 
uses our smoothness criterion to define a system of linear equations 
over the sampled nodes.  
These equations impose smoothness over each of the samples individually.  
Therefore, all possible solutions must be smooth over the neighborhood 
of the sampled nodes.   
Next, we propose an extension to our suggested method, where the interpolation is {performed} iteratively.  
In iteration $i$, {we interpolate} over all nodes 
in the $0,1,\dots,i$th neighborhoods of any of the 
 sampled nodes. That is, in the first iteration, the interpolation is done over the sampled nodes. In the second 
iteration, we interpolate over the sampled nodes and all nodes that are adjacent to a sampled node, and so on.
In this way, the final solution is guaranteed to be smooth over all the nodes.

Both  interpolation methods discussed above necessitate  computation of  the spectral decomposition of the graph shift operator, 
which  may be infeasible for big data.  We therefore introduce 
a computationally efficient approximation of our method which is derived  
using properties of the Markov matrix and a variation on the Nystr\"{o}m extension which we previously introduced in \cite{heimowitz2017diffusion, heimowitz2018nystrom}. 
This approximation achieves good accuracy in comparatively short runtimes, is feasible for massive datasets and makes no assumption on the 
sampling of the graph signals.  As with the first method we discuss, this approximation can also be extended to be implemented iteratively. 
The  Nystr\"{o}m extension variation we suggest can also be utilized 
for spectral regression~\cite{keller2011regression} and for the method in \cite{Narang2013ssl}  with minor modifications. 
These modifications are necessary since our smoothness interpolation method, as well as spectral regression, 
use the Markov matrix to define the connectivity of the graph while \cite{Narang2013ssl} uses the normalized graph Laplacian.

We provide a mathematical comparison between our interpolation techniques with existing state-of-the-art methods~\cite{chen2015sampling, keller2011regression, segarra2016ssl, Narang2013ssl, gadde2014smooth, jung2019message} in Section~\ref{subsec:comparison}. Furthermore, 
 in Section~\ref{subsec:Experiment}, we use the sampling suggested by Chen \textit{et al.} \cite{chen2015sampling} 
to  compare our {first two} interpolation methods with  \cite{chen2015sampling} and \cite{jung2019message} on the MNIST dataset 
of hand-written digits \cite{lecun1998mnist}. We also use this sampling to compare our iterative method with spectral regression \cite{keller2011regression}, noiseless inpainting~\cite{chen2015var, chen2016reconstruction, ma2015jmlr} on both a synthetic dataset and a dataset of 
temperature measurements across the US \cite{gsod2011dataset}.  
We show that our interpolation techniques outperform all these methods, and {that our iterative interpolation} achieves good  results even when a small number of samples ($10-20$) {is} used. 

This paper is organized as follows. Section \ref{sec:formulation} 
 contains background on the field of signal processing on graphs and the graph signal interpolation problem. In Section \ref{subsec:variation}  we introduce our 
smoothness measure for signals defined on graphs. 
In Section 
\ref{sec:mainIdea} we present our framework for graph signal interpolation. We present our computationally efficient variant in Section~\ref{subsec:bigData}. Finally, 
experimental results demonstrating our proposed graph interpolation methods are presented in Section~\ref{subsec:Experiment}.

\section{Problem Formulation}
\label{sec:formulation}

A graph is denoted as $\mathcal{G} = \langle \mathcal{V}, \mathcal{E}\rangle$
where $\mathcal{V}$ is the set of nodes and $\mathcal{E}$ is the set of
edges. For weighted graphs we denote the affinity matrix containing edge
weights as $\mathbf{W}$. 
The $ij$th element of the affinity matrix specifies the weight of an edge between node $v_i$ 
and node $v_j$. If no edge exists between these nodes, then  $W_{i,j}$ is set to $0$.

A signal over a graph is 
defined in the literature as a mapping from 
each node $v_i$ to a real or complex scalar value $s_i$ \cite{Sandryhaila2013shift, Sandryhaila2014filter, chen2015sampling, sandryhaila2014bigdata}.   
The pairwise (edge) information of the graph is contained in the 
graph shift operator $\mathbf{A}$ \cite{Sandryhaila2013shift}. The graph shift operator is a 
weighted adjacency matrix where the $ij$th entry corresponds to 
the pairwise relationship between nodes $v_i$ and $v_j$.
 This operator may be the affinity matrix $\mathbf{W}$, the graph Laplacian $\mathbf{L}$ or the Markov matrix $\mathbf{P}$.

The graph shift operator is used to generalize operations in signal processing to graphs. One such operation is the graph shift operation \cite{Sandryhaila2014filter}, which is defined as
\begin{equation}\label{equ:graph_shift}
  \tilde{\mathbf{s}} = \mathbf{A} \mathbf{s}. 
\end{equation} 
This operation redistributes the graph signal at each node according to its
neighborhood and is a generalization of time shifts \cite{Sandryhaila2013shift}.   

The graph shift operator is also  used in the definition of the 
graph Fourier transform (GFT) \cite{Sandryhaila2013shift}, which is defined as
\begin{equation}\label{equ:gft}
  \hat{\mathbf{s}} = \mathbf{V}^{-1} \mathbf{s}.
\end{equation}
If the graph shift is diagonalizable then $\mathbf{V}$ is the matrix containing in its columns the eigenvectors of the graph shift operator.
Otherwise, $\mathbf{V}$ is the matrix containing in its columns the generalized eigenvectors of the graph shift operator.
The vector $\hat{\mathbf{s}}$ is the spectrum of the graph signal. When this spectrum 
contains $k$ nonzero entries, we say that the  graph signal is $k$-bandlimited \cite{chen2015sampling}.  

In this paper, we focus on smooth graph signals. 
Such signals are mappings from 
each node $v_i$ to a real or complex scalar value $s_i$, such that 
the vector $\mathbf{s} = \begin{bmatrix}  s_1 & \cdots & s_{N} \end{bmatrix}^T\in \mathbb{C}^N$ 
is smooth over the graph. Under this definition, the  geometry of the graph will contain information about the graph signal, and the graph signal 
 contains geometric information. 
 
 We consider the problem of graph signal interpolation, where a smooth graph signal is recovered from its samples and the known graph structure. We denote the set of $r$ sampled nodes as  $\mathcal{M}$  and the vector of samples as $\mathbf{s}_{\mathcal{M}} \in \mathbb{R}^{r \times 1}$.  For this problem,  
perfect reconstruction is possible for $k$ bandlimited graph signals under conditions formulated in~\cite{chen2015sampling, Narang2013ssl}. Specifically, 
a bandlimited  graph signal can be perfectly recovered from its samples if the matrix  
produced by sampling  the  $k$ eigenvectors  at the rows corresponding to the  
known graph signal is  invertible.

In the following sections, we will present three  algorithms for smooth graph signal interpolation. In two of these solutions, perfect reconstruction is guaranteed under the conditions specified above. Our third method is characterized by reduced computational complexity and, as a result, fast runtimes. Our rechniques are based on the Markov variation, a  smoothness measure which we motivate and introduce in the next section.

\section{Smooth Graph Signals}
\label{subsec:variation}

As mentioned in Section~\ref{sec:formulation}, we consider the problem of graph signal interpolation, where a  smooth graph signal is recovered from its samples. 
The smoothness of a signal  is to be determined through a  measure which 
assigns  a numerical value detailing the change of the signal over the 
graph edges.  Examples of such measures include the total variation measure, defined as 
in \cite{Sandryhaila2014filter},
\begin{equation} \label{equ:tv}
TV \left( \mathbf{s} \right) = \Vert \mathbf{s} - \tilde{\mathbf{A}}\mathbf{s} \Vert_p,
\end{equation}
where $\tilde{\mathbf{A}}$ is a normalization of the graph shift operator (the weighted adjacency matrix) such that the largest 
magnitude eigenvalue is equal to one. 

An alternative smoothness measure, related to edge derivatives \cite{dong2016smooth, zhou2004smooth}, is given by
\begin{equation} \label{equ:measure2}
\mathbf{s}^T \mathbf{\mathcal{L}} \mathbf{s} = \frac{1}{2} \sum_{i=1}^{N} \sum_{m=1}^{N} W_{i, m} \left(s_i - s_m \right)^2
\end{equation}
where $\mathbf{W}$ is the symmetric affinity matrix, $\mathbf{\mathcal{L}}$ is the unnormalized graph Laplacian $\mathbf{\mathcal{L}} = \mathbf{D} - \mathbf{W}$ and $\mathbf{D}$ so the diagonal matrix that contains in its diagonal the degree of each node.

There are two properties of smoothness we would like to ensure in our smoothness measure. First, we would like the smoothness measure to reach a global minimum for a constant graph signal (that is, a graph signal that maps every node to the same scalar value). The second property is that the smoothness measure can distinguish between graph signals that are smooth across each edge of the graph individually, and graph signals that are smooth 
across all  incident edges. 

To test the first property, we consider all signals of the form
\begin{equation*}
\mathbf{s}_{1} = c \mathbf{1},
\end{equation*}
where $\mathbf{1}$ is the all ones vector and $c \in \mathbb{C}$. Signals of this form do not change between 
any two nodes in the graph, and are therefore the smoothest possible graph signals. 
The measure~\eqref{equ:measure2} will indeed reach a global minimum for such signals as $\mathbf{\mathcal{L}}$ is known to be positive semi-definite and as such
$$\mathbf{s}^T \mathbf{\mathcal{L}} \mathbf{s}  \ge 0$$
and, additionally
\begin{equation}
\mathbf{s}_{1}^T \mathbf{\mathcal{L}} \mathbf{s}_{1} = \frac{1}{2} \sum_{i=1}^{N} \sum_{m=1}^{N} W_{i, m} \left(c - c \right)^2 = 0.
\end{equation}
However, the total variation smoothness measure may violate this criterion. The total variation of $\mathbf{s}_1$ is
\begin{equation} 
TV \left( \mathbf{s}_{1} \right) =  \Vert c \mathbf{1} -  c \tilde{\mathbf{D}} \mathbf{1}  \Vert_p,
\end{equation}
where $\tilde{\mathbf{D}}$ is the matrix containing in its diagonal the degrees of $\tilde{\mathbf{A}}$. This will  equal $0$ when 
$\tilde{\mathbf{D}}$ is the identity matrix. Otherwise, there is no guarantee that the total variation  
 will reach a global minimum for signals of the form  $\mathbf{s}_1$. 
 
 As for the second property, we consider the following affinity matrix
\begin{equation*}
\mathbf{A} = \begin{bmatrix}
0 & 1 & 1 & 1 & 1\\
1 & 0      & 0      & 0      & 0    \\ 
1 & 0      & 0      & 0      & 0    \\ 
1 & 0      & 0      & 0      & 0    \\ 
1 & 0      & 0      & 0      & 0    \\ 
\end{bmatrix},
\end{equation*}
{and the  graph signals}, 
\begin{gather*}
\mathbf{s}_2 = \begin{bmatrix} 0 & -2 & -2 & 2 & 2 \end{bmatrix}^T,\\
\mathbf{s}_3 = \begin{bmatrix} 0 & 2 & 2 & 2 & 2 \end{bmatrix}^T.
\end{gather*}
When evaluating the change of signal over each edge independently, these signals are equally smooth. 
However, when considering the change of signal across all edges incident to each node, 
$\mathbf{s}_2$ is smoother than $\mathbf{s}_3$. According to~\eqref{equ:measure2} these signals are  equally smooth, since
\begin{equation}
\mathbf{s}_2^T \mathbf{\mathcal{L}} \mathbf{s}_2 = \frac{1}{2} \sum_{i=1}^{N} \sum_{m=1}^{N} 4 W_{i, m} = \mathbf{s}_3^T \mathbf{\mathcal{L}} \mathbf{s}_3.
\end{equation}
The reason for this is that~\eqref{equ:measure2} takes into account only the magnitude of the difference between the graph signal on adjacent nodes. 
In other words, the change of a graph signal across edge $e_i$ is  independent of the change across all edges incident to $e_i$.  

In this paper we suggest an additional smoothness measure, the Markov variation. This measure has a probabilistic nature, which ensures that both 
aforementioned qualities are attained. That is, the Markov variation  will reach a global minimum for signals of the form  $\mathbf{s}_1$, and, additionally, 
takes into account the changes across all incident edges. 
The Markov variation is the measure
\begin{equation}
\Vert \mathbf{s} - \mathbf{D}^{-1} \mathbf{A} \mathbf{s} \Vert,
\end{equation}
where $\mathbf{D}$ is the diagonal matrix containing the degrees of $\mathbf{A}$. This measure is similar to total variation, the difference being in the suggested normalization of the graph shift operator. For $\mathbf{A}=\mathbf{P}$ the Markov variation and the total variation will equal.

To gain intuition into this measure, we 
consider a smooth graph signal $\mathbf{s} \in \mathbb{R}^N$.  
Since $\mathbf{s}$ should map closely connected nodes to  
similar values, we can think of the graph signal at the nodes neighboring  $v_i$ (\textit{i.e.} 
 $\{s_j \}_{j \in \mathcal{N}_i}$)  as defining a distribution over $s_i$. We therefore model 
 $s_i$ as
\begin{equation} \label{equ:consistModel2}
s_i = \sum_{m \in \mathcal{N}_i}  P_{i, m} s_m + \epsilon \left( s_i \right),
\end{equation}
where  $\mathbf{P}$ is the Markov transition matrix, $$\mathbf{P} = \mathbf{D}^{-1} \mathbf{A},$$ $ \mathcal{N}_i$ is the set of nodes adjacent to $v_i$ and $\epsilon$ is 
the error.

The model  in (\ref{equ:consistModel2}) consists of two terms. 
The first  
is an estimate of the graph signal $s_i$ 
based only on the neighboring nodes and transition probabilities. This is a Markovian model, where the assumption is that when the graph 
signals at neighboring nodes are known, there is no dependence on non-neighboring nodes. Since we focus on graph signals whose mappings  
conform to the geometry of the graph, 
the transition probabilities between two neighboring nodes can be thought of as an approximation of the probability of both nodes having the same graph signal. 
Thus, for any smooth graph signal $\mathbf{s}$ we expect $s_i - \sum_{m \in \mathcal{N}_i}  P_{i, m} s_m$ to be small. 
This difference corresponds to the second term in (\ref{equ:consistModel2}) which is an error term  $\epsilon$ 
that explains variations from the 
weighted sum of neighbors. 

Our suggested measure, which we call the Markov variation, is the  norm of the error term
\begin{equation} \label{generalMeasure}
MV \left( \mathbf{s} \right) = \Vert \epsilon \left( \mathbf{s} \right) \Vert_p = \Vert \mathbf{s} - \mathbf{D}^{-1} \mathbf{A} \mathbf{s} \Vert = \Vert \mathbf{s} - \mathbf{P} \mathbf{s} \Vert_p.
\end{equation}
For example, using the $\ell_1$ norm we obtain
\begin{equation} \label{equ:ourMeasure}
\Vert \mathbf{s} - \mathbf{P} \mathbf{s} \Vert_1 =  \sum_{i=1}^{N}  \frac{1}{d\left( v_i \right)} \left| \sum_{m=1}^{N}  W_{i, m} \left(s_i - s_m \right) \right|.
\end{equation}
We  define  a smooth graph signal as a vector $\mathbf{s} \in \mathbb{R}^N$ with low Markov variation, \textit{i.e.},
\begin{equation} \label{equ:definition}
\Vert \mathbf{s} - \mathbf{P} \mathbf{s} \Vert_p < \eta,
\end{equation}
where $\eta$ is determined according to the number of nodes in the graph. 

The Markov variation bears some similarity to both \eqref{equ:tv} and \eqref{equ:measure2}, 
while offering a probabilistic interpretation.  Due to its probabilistic nature, this measure reaches a global minimum for $\mathbf{s}_1$, that is
\begin{equation} \label{equ:mv}
MV \left( \mathbf{s}_{1} \right) = \Vert c \mathbf{1} - c \mathbf{P 1} \Vert = \Vert c \mathbf{1} - c \mathbf{1} \Vert = 0.
\end{equation}
Furthermore, when considering the change of signal across all edges incident to each node, 
$\mathbf{s}_2$ is smoother than $\mathbf{s}_3$
\begin{equation} 
MV \left( \mathbf{s}_{2} \right) =  \Vert \mathbf{s}_2 - \mathbf{P}\mathbf{s}_2 \Vert_2 = 4,
\end{equation}
\begin{equation} 
MV \left( \mathbf{s}_{3} \right) =  \Vert \mathbf{s}_3 - \mathbf{P}\mathbf{s}_3 \Vert_2 = 4.47.
\end{equation}
Once again, the Markov variation attains this property due to its probabilistic nature.

\section{Graph Signal Interpolation}
\label{sec:mainIdea}

We now show how  the Markov variation can be used for interpolation of smooth graph signals from $r$ samples,  
where each sample is a mapping of a node 
to a known scalar. To this end, we first provide a spectral interpretation of the Markov variation, and connect it to diffusion maps~\cite{coifman2006diffusion}. We 
then use these properties to show that the spectrum of a smooth graph signal is naturally sparse. Based on these insights, in Section~\ref{subsec:method1}, we  formulate the interpolation of  smooth graph signals as a constrained optimization problem.

We denote the set of $r$ sampled nodes as  $\mathcal{M}$  
and the vector of samples   
 as $\mathbf{s}_{\mathcal{M}} \in \mathbb{R}^{r \times 1}$. 
Our goal is to recover a smooth graph signal $\mathbf{s}$ from $\mathbf{s}_{\mathcal{M}}$ 
using the known graph structure.

\subsection{Spectral Interpretation of the Markov Variation}

The  Markov variation expresses an equivalence between smoothness measured 
over the edges of the graph and smoothness measured over diffusion embedding vectors~\cite{coifman2006diffusion, heimowitz2017diffusion}, which are defined as
\begin{equation}
{\Psi}_t \left( i \right) = \begin{bmatrix}
\lambda_1^t \psi_1 \left( i \right) \\
\lambda_2^t \psi_2 \left(i \right) \\
\vdots \\
\lambda_N^t \psi_N \left(i \right) \end{bmatrix}, \quad i=1,\dots,N,
\label{equ:embedding}
\end{equation}
where $\lambda_i$ and $\psi_i$ are the $i$th eigenvalue and eigennvector of the Markov matrix $\mathbf{P}$, correspondingly, and $t$ is a scale factor.

Since  the Markov matrix is diagonalizable for undirected graphs (see  Appendix A),~\eqref{generalMeasure} can be written as 
\begin{equation}
\Vert \mathbf{s} - \mathbf{P}\mathbf{s} \Vert_p = \Vert \mathbf{s} - \mathbf{V} \mathbf{\Lambda} \mathbf{V}^{-1} \mathbf{s} \Vert_p
\end{equation}
where $\mathbf{\Lambda}$ is a diagonal matrix that contains the eigenvalues of the Markov matrix.  
In general, the diffusion embedding vectors can be expressed as 
$\mathbf{\Psi}_t = \mathbf{V} \mathbf{\Lambda}^t$, where the $i$th row  of $\mathbf{\Psi}_t $ 
equals
\begin{equation}
\Psi_t^T \left( i \right) =  \begin{bmatrix}  \lambda_1^t \psi_1 \left( i \right) & \lambda_2^t \psi_2 \left( i \right) & \cdots & \lambda_N^t \psi_N \left( i \right) \ \end{bmatrix}.
\end{equation}
The $\ell_p$ norm of  (\ref{equ:definition}) can thus be written as  
\begin{equation} \label{equ:embedSmooth3}
\Vert \mathbf{s} - \mathbf{P}\mathbf{s} \Vert_p^p = 
\sum_{i=1}^N \vert  s_i -   \Psi_1^T \left( i \right) \hat{\mathbf{s}}
\vert^p < \eta^p,
\end{equation}
which implies that for a vector $\mathbf{s}$ to be a smooth graph signal, $s_i$ must be close to $s_j$ 
if  $\Psi_1 \left(i \right)$ and $\Psi_1 \left( j \right)$ are close (in the $\ell_2$ sense).

We conclude that on the one hand the Markov variation can be expressed as a connection between 
the graph signal and the geometry of the graph in the graph domain. On the other hand, 
the smoothness function can be expressed as a connection between the spectrum of the 
graph signal and the diffusion embedding vectors in the frequency domain.

Another important conclusion can be obtained from
\begin{equation} \label{equ:embedSmooth2}
\Vert \mathbf{s} - \mathbf{P}\mathbf{s} \Vert_p^p = 
 \Vert \mathbf{V} \left( \mathbf{I}_N - \mathbf{\Lambda}  \right) \hat{\mathbf{s}}  \Vert_p^p < \eta^p.
\end{equation}
Since the largest eigenvalue of the Markov matrix is $1$ (see Appendix A) and the magnitude 
of the smaller eigenvalues  is often  $0$, 
the entries of $\hat{\mathbf{s}}$ that are related to the highest eigenvalues do not contribute much  
to the sum (\ref{equ:embedSmooth2}). The entries of $\hat{\mathbf{s}}$ that  correspond to 
the lower eigenvalues  have a higher impact on the sum (\ref{equ:embedSmooth2}). 
This means that, in order for a signal $\mathbf{s}$ to be a smooth graph signal, many of the entries of $\hat{\mathbf{s}}$ that correspond to 
the lower valued eigenvalues must be negligible. In other words, the spectrum of a smooth graph 
signal is naturally approximately sparse.\footnote{We note that 
while we show this only for the case where the graph shift is the Markov matrix, it is true also for general graph shifts \cite{chen2015sampling}.}

\subsection{Interpolation by Smoothness}
\label{subsec:method1}

As the graph signal is smooth, we conclude from (\ref{equ:embedSmooth3}) that  
for each node $i$, 
\begin{equation} \label{equ:smooth2}
s_i \approx  \Psi_1^T \left( i \right) \hat{\mathbf{s}}.
\end{equation}
If the signal $\mathbf{s}$ is the smoothest possible signal according to the Markov variation (\textit{i.e.}, the error term is $0$), then
\begin{equation} \label{equ:smooth3}
s_i =  \Psi_1^T \left( i \right) \hat{\mathbf{s}},
\end{equation}
which leads to the following system of equations 
\begin{equation} \label{equ:system}
\begin{bmatrix} s_1 \\ s_2 \\ \ \vdots \\ s_N \end{bmatrix}  =
\begin{bmatrix} 
\lambda_1 \psi_1 \left( 1 \right) & \lambda_2 \psi_2 \left( 1 \right) & \cdots & \lambda_N \psi_N \left( 1 \right)\\
\lambda_1 \psi_1 \left( 2 \right) & \lambda_2 \psi_2 \left( 2 \right) & \cdots & \lambda_N \psi_N \left( 2 \right)\\
\vdots & \vdots & \ddots & \vdots\\
\lambda_1 \psi_1 \left( N \right) & \lambda_2 \psi_2 \left( N \right) & \cdots & \lambda_N \psi_N \left( N \right)
\end{bmatrix} 
\begin{bmatrix} \hat{s}_1 \\ \hat{s}_2 \\ \vdots \\ \hat{s}_N \end{bmatrix}.
\end{equation}
Out of these $N$ equations, we examine those  that correspond to the  known graph signal
\begin{equation} \label{equ:underdetermine}
\mathbf{s}_{\mathcal{M}} = 
\begin{bmatrix} 
\lambda_1 \psi_1 \left( \mathcal{M} \right) & \lambda_2 \psi_2 \left( \mathcal{M}  \right) & \cdots & \lambda_N \psi_N \left( \mathcal{M}  \right)\\
\end{bmatrix} 
\hat{\mathbf{s}},
\end{equation}
where $\psi_i \left( \mathcal{M} \right)$ is the sub-vector of $\psi_i$ that  
contains only the entries at the set of indices $\mathcal{M}$.

The solution of (\ref{equ:underdetermine}) is not unique. 
One such solution is for example the least squares vector, 
\begin{equation}
\arg \underset{\mathbf{\hat{s}}}{\min} \Vert \mathbf{A} \hat{\mathbf{s}} - \mathbf{s}_{\mathcal{M}} \Vert_2
\end{equation}
where 
\begin{equation}
\mathbf{A} = \begin{bmatrix} 
\lambda_1 \psi_1 \left( \mathcal{M} \right) & \lambda_2 \psi_2 \left( \mathcal{M}  \right) & \cdots & \lambda_N \psi_N \left( \mathcal{M}  \right)
\end{bmatrix}.
\end{equation}

The least squares solution ignores our prior knowledge concerning the sparsity of the graph signal. 
Therefore, rather than using this solution, 
we  search for the 
subset of eigenvectors of the Markov matrix that best explain the known portion of the graph signal. 
This leads to   the following sparse optimization problem:
\begin{multline} \label{equ:sparseOpt_org}
\hat{\mathbf{s}} = \arg \underset{\mathbf{y}}{\min} \Vert \mathbf{y} \Vert_0 \quad \text{such that} \\
\begin{bmatrix} 
 \lambda_1 \psi_1 \left( \mathcal{M} \right) & \lambda_2 \psi_2 \left( \mathcal{M}  \right) & \cdots & \lambda_N \psi_N \left( \mathcal{M}  \right)\\
\end{bmatrix} 
\mathbf{y} = \mathbf{s}_{\mathcal{M}} .
\end{multline}
The solution to \eqref{equ:sparseOpt_org} is the sparse spectrum of a signal that is consistent with 
$\mathbf{S}_\mathcal{M}$ and is smooth in the neighborhood of the nodes in $\mathcal{M}$ (See Appendix B)\footnote{We note that this solution uses the prior knowledge that the spectrum of a smooth graph signal is sparse. It does not explicitly use any knowledge of the likely location of the zeros. We discuss this in the end of this section.}. 
The vector of graph signals is obtained by inserting the solution of \eqref{equ:sparseOpt_org} into \eqref{equ:system}.  

The optimization problem of~\eqref{equ:sparseOpt_org} includes $l_0$ regression which is known to be NP-hard. 
We therefore approximate the solution using $l_1$ regression. We also add to each 
constraint a small tolerance 
 for error in  accordance 
with \eqref{equ:smooth2}. The optimization problem we solve is therefore
\begin{multline}  \label{equ:sparseOpt}
\hat{\mathbf{s}} = \arg \underset{\mathbf{y}}{\min} \Vert \mathbf{y} \Vert_1 \quad \text{such that} \\
\vert \begin{bmatrix} 
 \lambda_1 \psi_1 \left( \mathcal{M} \right) & \lambda_2 \psi_2 \left( \mathcal{M}  \right) & \cdots & \lambda_N \psi_N \left( \mathcal{M}  \right)\\
\end{bmatrix} 
\mathbf{y} - \mathbf{s}_{\mathcal{M}} \vert < \eta.
\end{multline}
{The solution of~\eqref{equ:sparseOpt} is not guaranteed to be unique. If there exist several solutions, one can be chosen arbitrarily.}

We note that, the solution of~\eqref{equ:sparseOpt_org} will be a bandlimited graph signal that is guaranteed to be smooth in the one-hop neighborhood of the sampled nodes. 
In general, a bandlimited graph signal need not be smooth. Rather,  in order for a signal $\mathbf{s}$ to be a smooth graph signal, many of the entries of $\mathbf{\hat{s}}$ that correspond to the lower valued eigenvalues must be negligible. In our proposed solution, we search for the spectrum $\mathbf{\hat{s}}$ with minimal $\ell_1$ norm such that
\begin{equation}
\vert  \hat{s}_1 \lambda_1 \psi_1 \left( \mathcal{M} \right) + \cdots + \hat{s}_N \lambda_N  \psi_N \left( \mathcal{M} \right) - \mathbf{s}_{\mathcal{M}} \vert < \eta.
\end{equation}
Since $\lambda_N \le \cdots \le \lambda_1$, in order for some eigenvector $\psi_i$, which is associated with a low-valued $\lambda_i$, to be consequential in the sum, the value of $\hat{s}_i$ must be large. As we search for $\mathbf{\hat{s}}$ with minimal $\ell_1$ norm, this is an unlikely situation. In other words, the solution of our interpolation method is guaranteed to be smooth in the one-hop neighborhood of the sampled nodes and, in addition, contains a bias towards smooth graph signals.

\subsection{Comparison to Existing Interpolation Methods}
\label{subsec:comparison}

{The works} \cite{keller2011regression, Narang2013ssl, segarra2016ssl, chen2015sampling}  
have  taken a similar approach {to the graph signal} interpolation problem, in that they all formulate the interpolation as a solution to some linear system of equations. However, the system of equations defined here is unique since  
its definition is based on the Markov variation.  
Furthermore, we include a bias towards smooth graph signals, and do not predetermine the sparsity of the spectrum.

In contrast, \cite{keller2011regression, Narang2013ssl, segarra2016ssl, chen2015sampling}  all define a 
set of equations based on the graph Fourier transform (GFT). The solutions to such a system are all 
 graph signals that are consistent with the samples of the graph signal. In order to ensure the interpolation 
returns a smooth graph signal, these methods predetermine the sparsity of the {signal} spectrum. In other words, they 
search for a graph signal that is consistent with the samples and resides in the span of $K$ predetermined eigenvectors 
of the graph shift operator (that is, the $K$ leading eigenvectors, where the value of $K$ is often assumed to be known, or determined according to the magnitude of the eigenvalues).

Spectral regression  \cite{keller2011regression} defines the following system of equations
\begin{multline} \label{equ:sparseOpt2}
\hat{\mathbf{s}} = \arg \underset{\mathbf{y}}{\min} \Vert \mathbf{y} \Vert_1 \quad \text{such that} \\
\begin{bmatrix} 
  \psi_1 \left( \mathcal{M} \right) &  \psi_2 \left( \mathcal{M}  \right) & \cdots &  \psi_K \left( \mathcal{M}  \right)\\
\end{bmatrix} 
\mathbf{y} = \mathbf{s}_{\mathcal{M}} ,
\end{multline}
where $\psi_1, \dots \psi_K$ denote the $K$ eigenvectors of the Markov matrix corresponding to the largest magnitude eigenvalues.  
Narang \textit{et al.} \cite{Narang2013ssl} suggest a method for interpolating bandlimited graph signals 
using the eigenvectors of the normalized graph Laplacian. The interpolation is {performed} on $\mathbf{D}^{\frac{1}{2}} \mathbf{s}$, 
and, similar to spectral regression, is based on a system of equations extracted from the GFT. Mathematically, the system of 
linear equations is
\begin{equation} 
\begin{bmatrix} 
  \psi^L_1 \left( \mathcal{M} \right) &  \psi^L_2 \left( \mathcal{M}  \right) & \cdots &  \psi^L_K \left( \mathcal{M}  \right)\\
\end{bmatrix} 
\mathbf{y} = \mathbf{D}^{\frac{1}{2}} \mathbf{s}_{\mathcal{M}},
\label{equ:narang}
\end{equation}
where $\psi^L_1,\dots, \psi^L$ denote $K$ eigenvectors of the normalized graph Laplacian.  
The solution to the system is computed through linear least squares. 
We note that, once again, the bandlimit of the solution to~\eqref{equ:narang} must be determined before solving the system of equations.

In (\cite{chen2015sampling}, Section 5) 
Chen \textit{et al.} suggest interpolation methods for two clustering applications. Once again, their system of equations is created from the GFT. 
In contrast to the previous systems, here each node is mapped to a vector of length $L$ (the number of clusters) rather than a scalar value. 
This vector is actually an indicator function for its node, meaning that for node $i$  in the first class, the signal will be $\begin{bmatrix} 1 & 0 & \dots & 0 \end{bmatrix}^T$. 
As each node is now mapped to a vector, the  graph signal  is a matrix $\mathbf{S} \in \mathbb{R}^{N \times L}$.  The interpolation is defined as the following optimization problem
\begin{multline} 
\hat{\mathbf{S}} = \arg \underset{\mathbf{Y} \in \mathbb{R}^{K \times L}}{\min} \\ \Vert \text{sign} \left( 
\begin{bmatrix}
 \psi^A_1 \left( \mathcal{M} \right) &  \psi^A_2 \left( \mathcal{M}  \right) & \cdots &  \psi^A_K \left( \mathcal{M} \right)
\end{bmatrix} \mathbf{Y}
\right) - \mathbf{S}{_\mathcal{M}} \Vert_2^2,
\label{equ:chen}
\end{multline}
where $\psi^A_1,\dots psi^A_K$ denote $K$ eigenvectors of the  graph shift and $\mathbf{S}{_\mathcal{M}} \in \mathbb{R}^{r \times L}$ is the matrix of the known portion of the graph signal. 
The optimization problem~\eqref{equ:chen} is solved by logistic regression. 
Here again, the sparsity of the spectrum must be predetermined.

Another method for sampling and reconstruction  of a   known graph signal is presented by 
Sergarra \textit{et al.} \cite{segarra2016ssl}. Here there is an added assumption on the formation model of smooth graph signals. 
Specifically, they assume a graph signal $\mathbf{s}$ is created from a known sparse signal $\mathbf{x}$ as
\begin{equation}
\mathbf{s} = \mathbf{H}\mathbf{x},
\end{equation}
where $\mathbf{H}$ is some graph filter. In this method the assumption is that $\mathbf{s}$ is known and the 
goal is to identify $\mathbf{H}$ and $\mathbf{x}$. This interpolation is {performed} through a system of linear equations 
based on the graph Fourier transform,
\begin{equation}
\hat{\mathbf{s}} = \mathbf{V}_L^{-1} \mathbf{Hx},
\end{equation}
where $\mathbf{V}_L$ denotes the matrix of eigenvectors of the normalized graph Laplacian. For a $K$-bandlimited graph signal, 
this set of $N$ equations can be divided 
into two systems. The first system consists of the $N-K$ equations for which $\hat{s}_i = 0$. These equations can be used to identify the 
coefficients of the graph filter $\mathbf{H}$. The rest of the equations are used to interpolate $\mathbf{x}$. Once again, $K$ must be 
predetermined.

In conclusion, the idea of graph signal interpolation via a system of linear equations is quite popular. However, all the methods we 
discuss above use the graph Fourier transform to define this system of equations. These systems are solved  
over the set of graph signals that comply with a predetermined sparsity of the spectrum. 
Contrary to this, we  derived a system of equations that is based on the Markov variation, which is a smoothness measure. 
Any solution to this system is guaranteed to be {a} smooth graph signal. Of all possible 
solutions, we select the signal with the smallest bandwidth. 
In this way, the sparsity of the graph signal's spectrum need not be predetermined. 
Instead, we determine the bandlimit in a data-driven manner. 

{Another difference between this work and \cite{keller2011regression, Narang2013ssl, segarra2016ssl, chen2015sampling}, 
is that our method~\eqref{equ:sparseOpt} naturally extends to iterative interpolation wherein each iteration is solved by the same vector 
or by a smoother vector than the previous iteration. This is not the case for any of \cite{keller2011regression, Narang2013ssl, segarra2016ssl, chen2015sampling}} as these methods are not based on a smoothness measure.

In Section \ref{subsec:Experiment} we show that our suggested interpolation technique{s} outperform 
state-of-the-art graph signal interpolation methods 
on synthetics data as well as the MNIST data set of hand-written digits  \cite{lecun1998mnist} and a data set of temperature measurements \cite{gsod2011dataset}.

All the above methods require knowledge of the eigendecomposition of the graph shift operator. 
This is a costly operation, and infeasible {for} large graphs. In Section~\ref{subsec:bigData} 
we introduce a method for efficiently estimating the eigenvectors and eigenvalues of the Markov matrix. 
This approach can also be used in  spectral regression, and can be easily adjusted to any method that 
uses a positive semi-definite graph shift.

\subsection{Iterative Interpolation}
\label{subsec:it}
Since the system of equations~\eqref{equ:underdetermine} was created on the basis of a smoothness measure, 
any solution must be smooth in the neighborhood of the sampled nodes.  As the spectrum of a smooth graph signal 
is naturally approximately sparse, of all possible  solutions to~\eqref{equ:underdetermine}, we select {a} solution with minimal $l_1$ norm.  

All smooth signals possess a sparse spectrum. However, not every signal with a sparse spectrum is smooth \cite{chen2015sampling}. 
Therefore, it is conceivable that the solution of~\eqref{equ:sparseOpt} may not be smooth over (one-hop) neighborhoods that do not contain  sampled nodes. 
To prevent this, we can iteratively solve~\eqref{equ:sparseOpt} while introducing in each iteration new nodes into the set $\mathcal{M}$. 
In the last iteration we ensure that every node in the graph is a neighbor of some node in $\mathcal{M}$. In this way we  guarantee that 
any solution in the last iteration is smooth.

Our iterative approach is initialized  with the set of sampled nodes $\mathcal{M}_0$. 
 In the first iteration we recover a signal with sparse spectrum 
that is guaranteed to be smooth (according to the Markov variation) in the (one-hop) neighborhood of $\mathcal{M}_0$. Then, 
in each  iteration $i$, we consider the 
set of sampled nodes  to be $\mathcal{M}_i = \mathcal{N} \left( \mathcal{M}_{i-1} \right)$, 
where $\mathcal{N} \left( \mathcal{M}_{i-1} \right)$ denotes the neighborhood of $\mathcal{M}_{i-1}$ 
(note that, by construction, $\mathcal{M}_{i-1} \subset \mathcal{N} \left( \mathcal{M}_{i-1} \right)$).  
The result of the $i$th iteration is a signal with sparse spectrum that is guaranteed to be 
 smooth (according to the markov variation) in the $i$th neighborhood of each node in $\mathcal{M}_0$. 
The stopping condition is that $\mathcal{M}_{i}$ will equal the set of all nodes $\mathcal{V}$. 
Therefore, the  interpolated graph signal is guaranteed to be smooth over all edges of the graph. 
Our {approach} is summarized in Algorithm \ref{alg:it}.

\begin{algorithm}
\caption{Iterative Interpolation}
\label{alg:it}
\begin{algorithmic}
\State {Let $\mathcal{G} = \langle \mathcal{V}, \mathcal{E} \rangle$ be a graph with sampling set $\mathcal{M}_0$}
\State $i \gets 1$
\Repeat
    \State $\hat{\mathbf{s}}_i \gets$ solve~\eqref{equ:sparseOpt}
    \State $\mathcal{M}_{i+1} \gets  \mathcal{N} \{ \mathcal{M}_i \}$ \% update sampling set for next iteration
    \State $i \gets i+1$
\Until {\textcolor{blue}{}{$\mathcal{M}_{i-1} = \mathcal{M}_{i} $}}
\end{algorithmic}
\end{algorithm}

{In each iteration of Algorithm \ref{alg:it} the solution is either unchanged or  improved. 
This is due to the effect of each update. {Specifically, since} $\mathcal{M}_{i} \subset \mathcal{M}_{i-1}$,  
the set of constraints 
\begin{equation}
\vert
\begin{bmatrix}  \lambda_1 \psi_1 \left( \mathcal{M}_i \right) & \lambda_2 \psi_2 \left( \mathcal{M}_i  \right) & \cdots & \lambda_N \psi_N \left( \mathcal{M}_i  \right)\\
\end{bmatrix} 
\mathbf{y} - \mathbf{s}_{\mathcal{M}_i} \vert < \eta
\end{equation}
is increased {in} each iteration. If the solution $\hat{\mathbf{s}}_i$ is smooth over the neighborhood of $\mathcal{M}_{i+1}$ then 
$\hat{\mathbf{s}}_{i+1} = \hat{\mathbf{s}}_i$. If $\hat{\mathbf{s}}_i$ is not smooth over the neighborhood of $\mathcal{M}_{i+1}$ then 
it is not a solution to
 \begin{multline} \label{equ:iter_i1}
\vert \begin{bmatrix} 
 \lambda_1 \psi_1 \left( \mathcal{M}_{i+1} \right) & \lambda_2 \psi_2 \left( \mathcal{M}_{i+1}  \right) & \cdots & \lambda_N \psi_N \left( \mathcal{M}_{i+1}  \right)\\
\end{bmatrix} \\
\mathbf{y} - \mathbf{s}_{\mathcal{M}_{i+1}} \vert < \eta.
\end{multline}
Instead, the solution of iteration $i+1$ is the vector with smallest $l_1$ norm that solves~\eqref{equ:iter_i1}. This vector is 
smoother than $\hat{\mathbf{s}}_i$ and has equal or higher $l_1$ norm.
}

In Fig.~\ref{fig:toy_example} we present a toy example to illustrate the difference between the one-shot algorithm~\eqref{equ:sparseOpt} and Algorithm~\ref{alg:it}. We randomly select $100$ points in $[0,1]\times [0,1]$, denoted as $x_1,\dots,x_{100}$. We define the affinity matrix $\mathbf{W}$ as
\begin{equation*}
W_{i,j} = 
\begin{cases}
 \mathrm{e}^{-d (x_i, x_j )}, & i \neq j,\\
 0, & i=j,
 \end{cases}
\end{equation*}
where $d(x_i,x_j)$ is the Euclidean distance between $x_i$ and $x_j$. We keep the highest $9$ entries in each row of the affinity matrix and set all other entries to zero. We then symmetrize the affinity matrix as
\begin{equation*}
\mathbf{W}_{\mathrm{s}} = \max (\mathbf{W}, \mathbf{W}^T ).
\end{equation*}
We use this symmetric matrix to compute the Markov matrix. 

It is clear from Fig.~\ref{fig:toy_example} that the result of both our methods is a smooth graph signal. However, the output of~\eqref{equ:sparseOpt} is a graph signal that is determined by the one-hop neighborhood of each node  and a bias towards smoothness. The result of Algorithm~\ref{alg:it} is a smooth graph signal determined by larger neighborhoods and a guarantee of smoothness. For this reason,  our iterative solution has a higher dependence on the closest of the sampled nodes. 

\begin{figure}
\subfigure[]{\includegraphics[width=0.45 \linewidth]{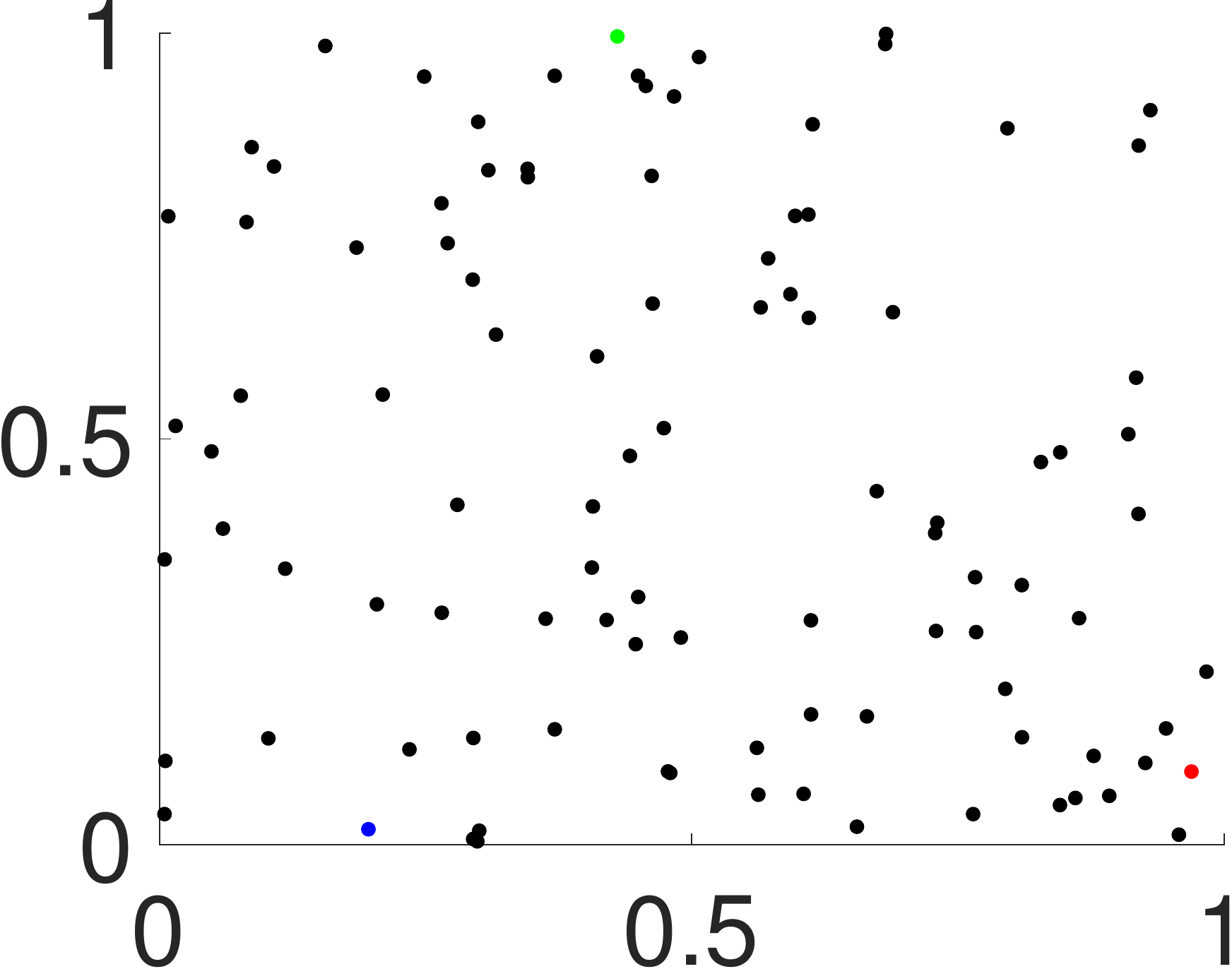} }
\subfigure[]{\includegraphics[width=0.45 \linewidth]{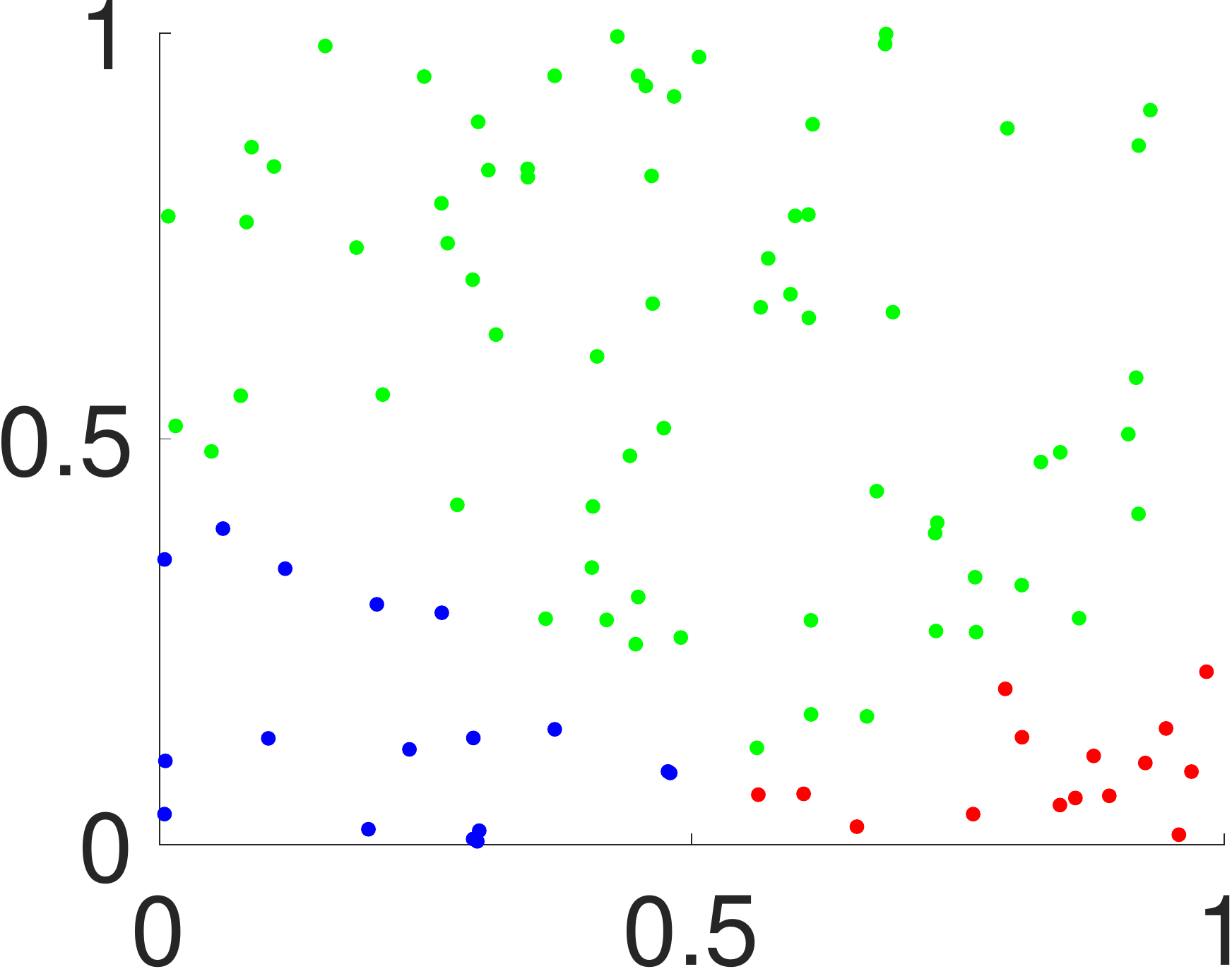} }\\
\subfigure[]{\includegraphics[width=0.45 \linewidth]{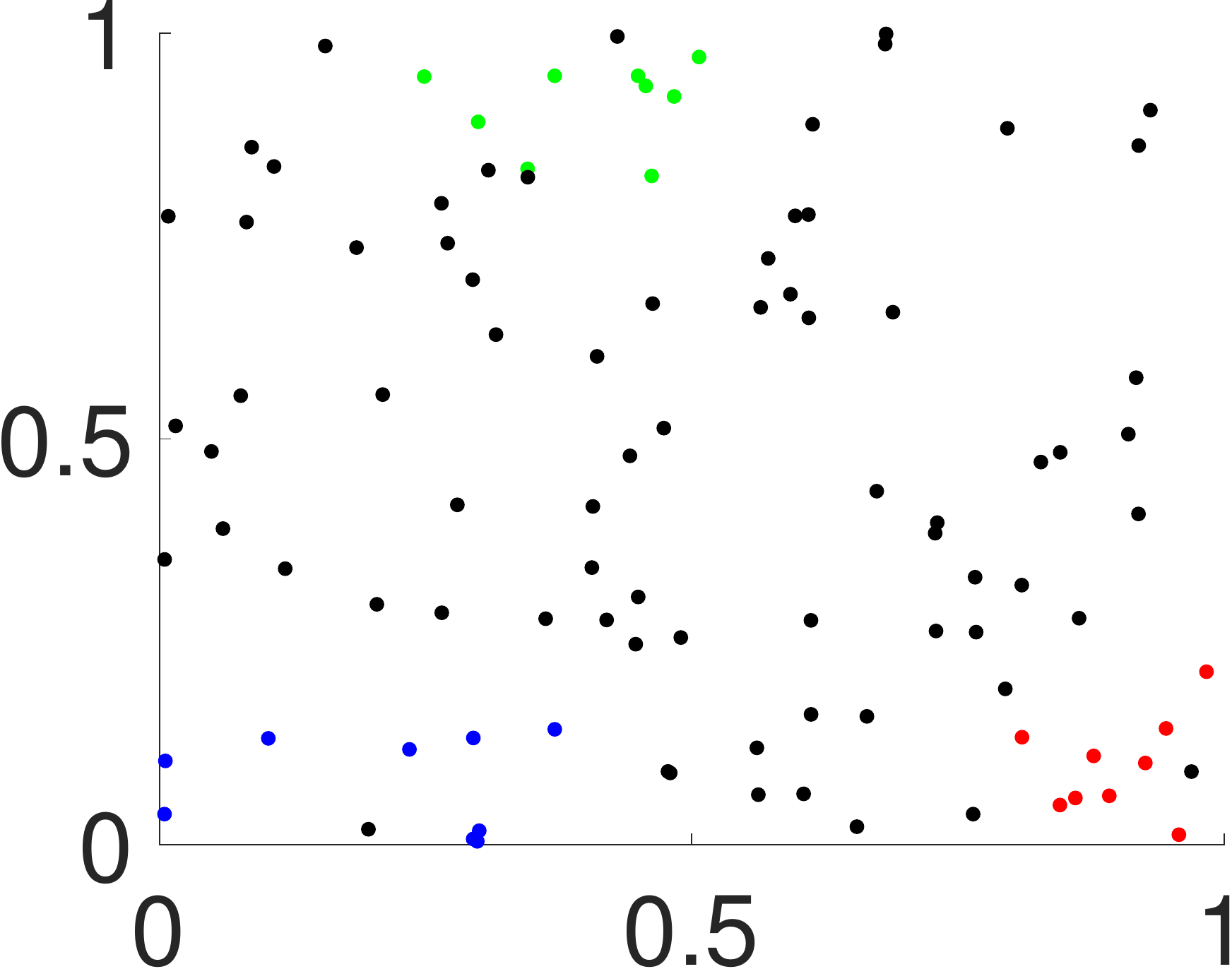} }
\subfigure[]{\includegraphics[width=0.45 \linewidth]{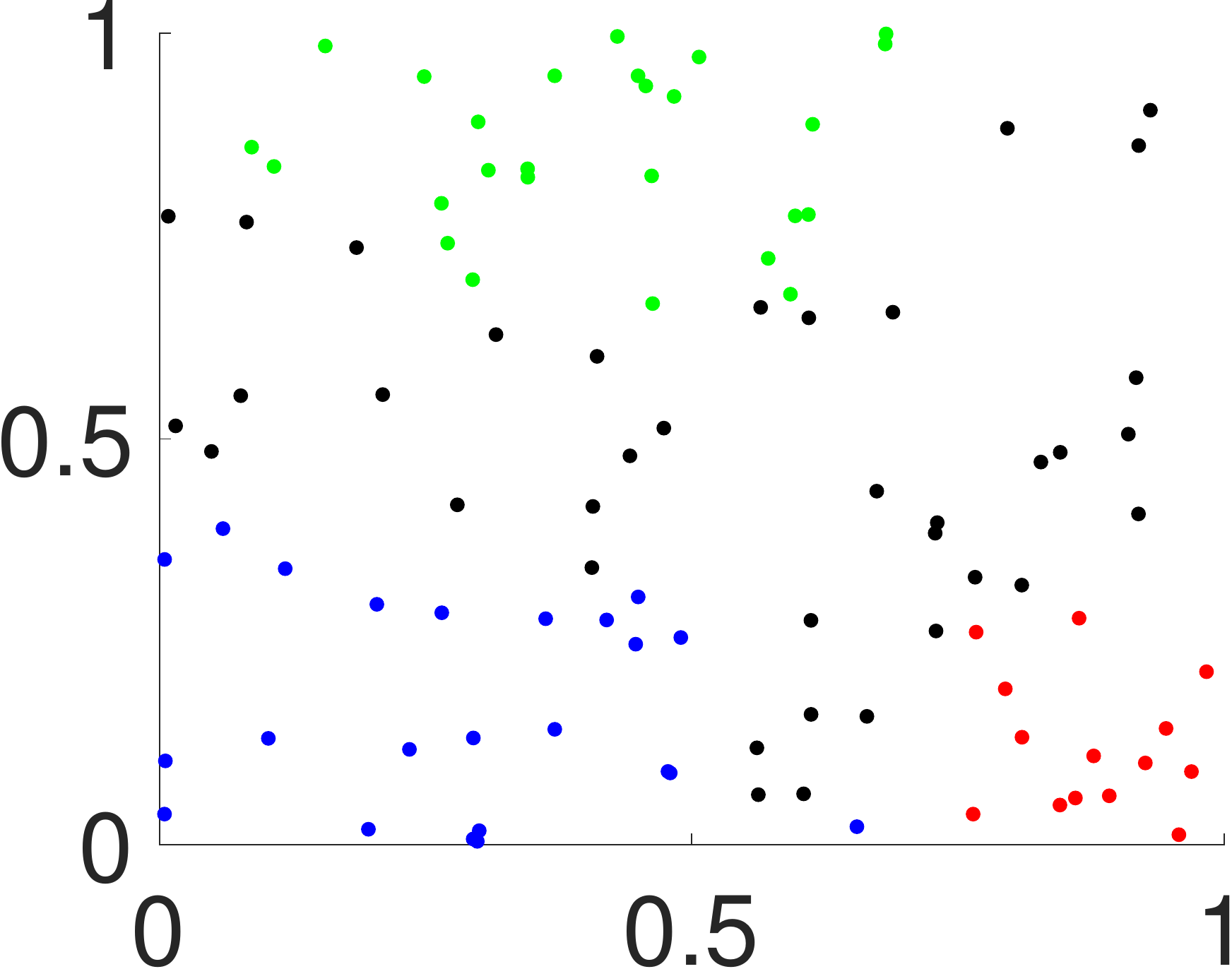} }\\
\subfigure[]{\includegraphics[width=0.45 \linewidth]{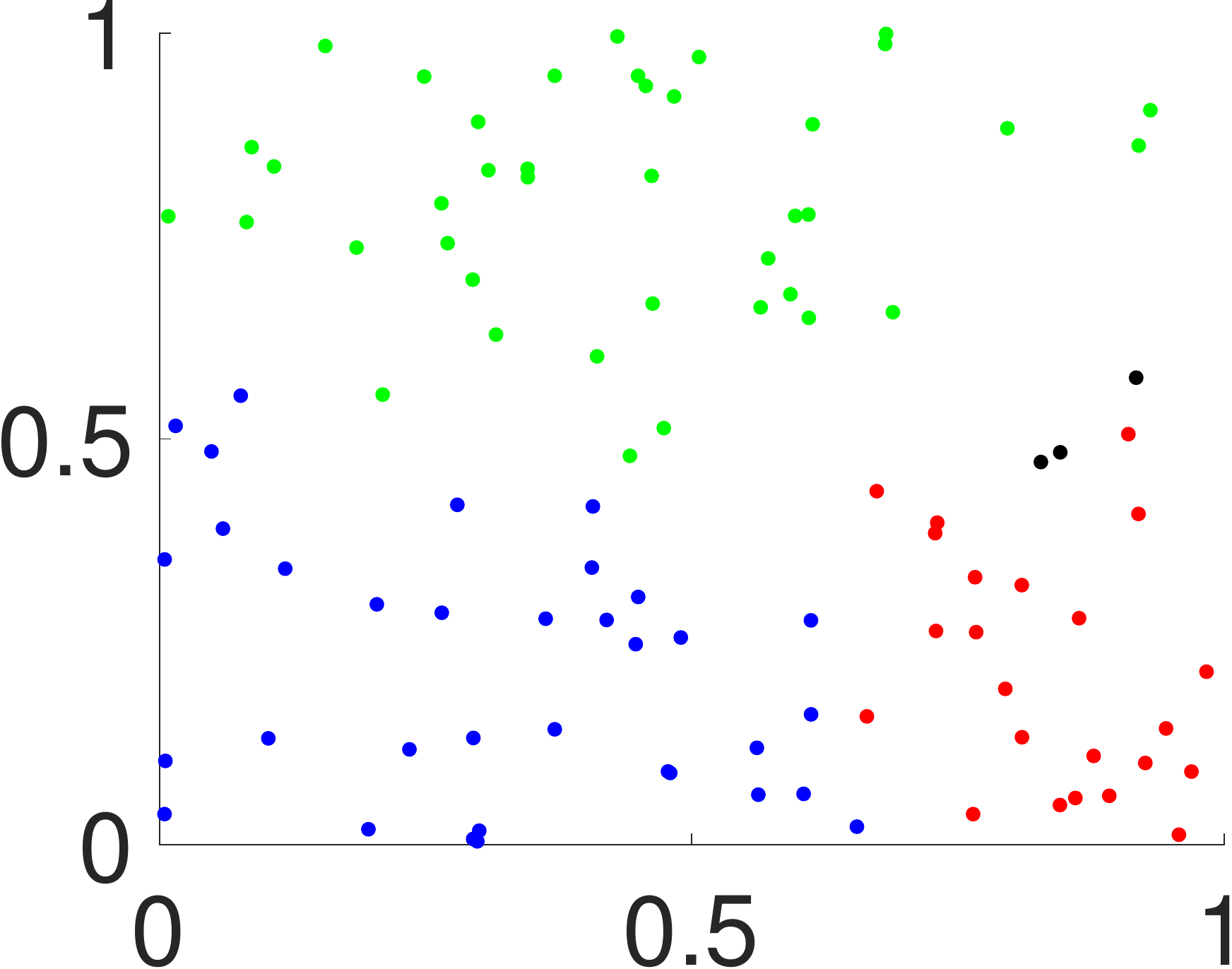} }
\subfigure[]{\includegraphics[width=0.45 \linewidth]{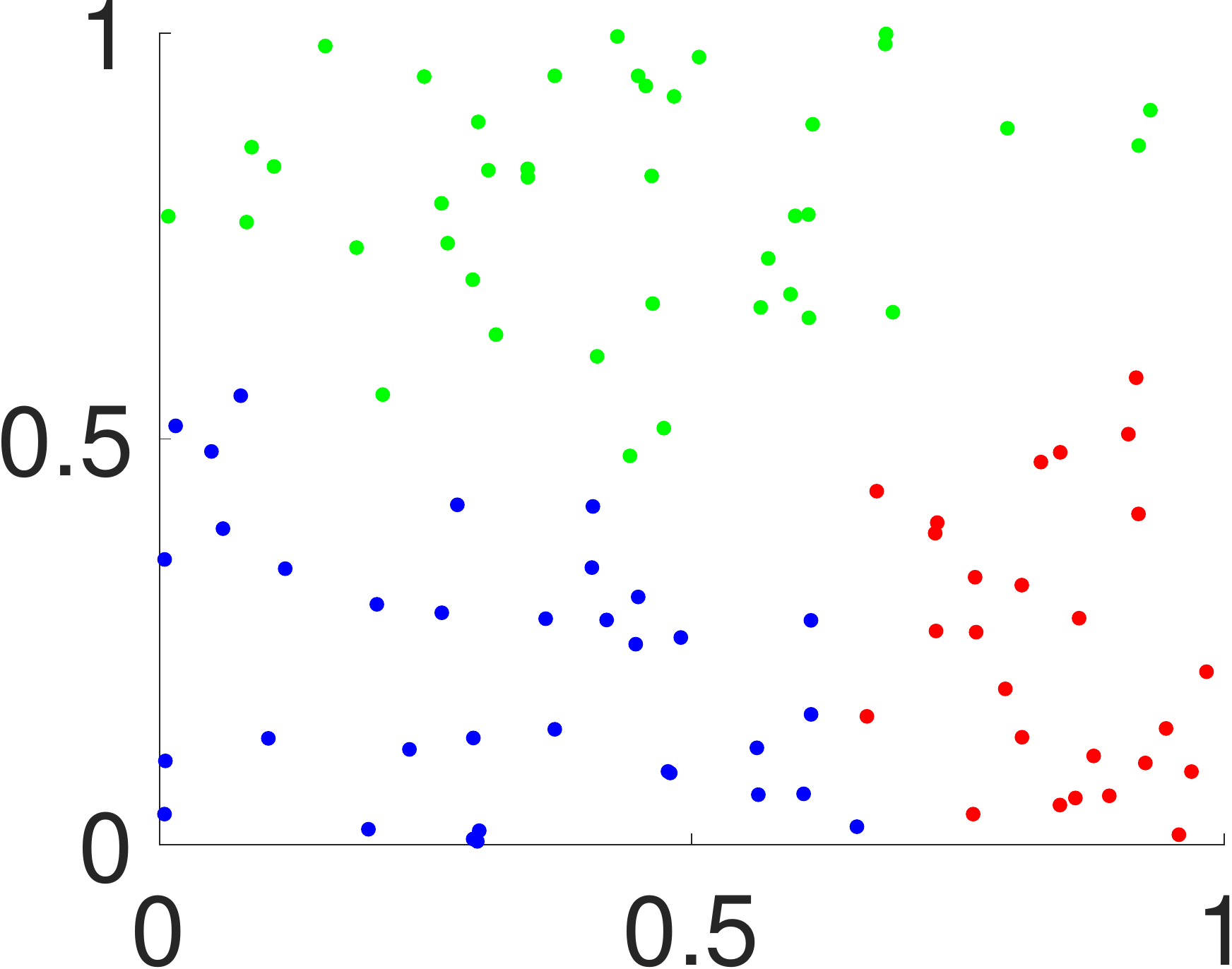} }
\caption{Graph signal interpolation. (a) Initial samples for $3$ clusters. The graph signal is indicated by color. (b) Solution of~\eqref{equ:sparseOpt}. (c) Result of the first iteration. The set $\mathcal{M}_1$ is marked by the color of the interpolated graph signal. The remaining nodes are marked in black (d) Result of the second iteration. (e) Result of the third iteration.  (f) Final result.}
\label{fig:toy_example}
\end{figure}

\textcolor{blue}{}{
\subsection{Convergence of Iterative Interpolation}
Algorithm~\ref{alg:it} will converge as long as each connected component in $\mathcal{G}$ contains a node in the sampling set $\mathcal{M}_1$. Specifically, we denote by $b$ the minimal integer for which the $b$-hop neighborhood of each node  contains all the nodes in the graph. Such a $b$ is guaranteed to exist since there exists a path between each node in the graph and a node in the sampling set $\mathcal{M}_0$. It follows that  $\mathcal{M}_{b+1} = \mathcal{V}$. }

Consequently,  the solution $\mathbf{\hat{s}}_b$ (to the $b$th iteration of Algorithm~\ref{alg:it}), is  determined by a system of $N$ equations and $N$ unknowns. The eigenvalues of $\mathbf{P}$ are known to decrease to zero.  As the eigenvectors of $\mathbf{P}$ are linearly independent, there exists a single solution $\mathbf{\hat{s}}_b$. In all subsequent iterations, the set $\mathcal{M}_c = \mathcal{V}$. This leads to the same system of equations and, as a consequence, the same solution. Therefore, Algorithm~\ref{alg:it} has converged.

\textcolor{blue}{}{
If the graph $\mathcal{G}$ contains a connected component that is not represented in the sampling set, the set $\mathcal{M}$ would still remain unchanged after $b$ iterations. There is therefore no point in running the algorithm further.
}

\section{Nystr\"{o}m Graph Signal Interpolation}
\label{subsec:bigData}

The solution of~\eqref{equ:sparseOpt}
necessitates computation of the eigendecomposition of the graph shift operator. This 
computation is costly in terms of both complexity and memory consumption. When the graph has many nodes it may not 
be feasible to compute eigenvectors and eigenvalues of the graph shift operator. However, when the 
graph shift operator is the Markov matrix, a variation on the 
Nystr\"{o}m extension  \cite{fowlkes2004nystrom, nystrom1928nystron, baker1977nystrom, press1992numerical} can be used for semi-supervised learning of big data.

\subsection{The Nystr\"{o}m Extension}

We begin by giving a short introduction to the Nystr\"{o}m extension. 
Let $\{\mathbf{x}_i\}_{i=1}^{N}$ be a set of data points. A matrix $\mathbf{K} \in \mathbb{R}^{N \times N}$ 
is constructed such that $K_{i,j}=k\left( \mathbf{x}_i, \mathbf{x}_j \right)$, where $k \left( \cdot \right)$ is 
some kernel function and $\mathbf{K}$ is a positive semi-definite (PSD) matrix.  
The matrix $\mathbf{K}$ can be 
considered as a combination of four block matrices,
\begin{equation} \label{equ:structure}
\mathbf{K} = \begin{bmatrix}
\mathbf{E}\quad \mathbf{B}^T  \\
\mathbf{B} \quad \mathbf{C}
\end{bmatrix},
\end{equation}
where $\mathbf{E}  \in \mathbb{R}^{r \times r}$, $\mathbf{ B}  \in \mathbb{R}^{N-r \times r}$, and $\mathbf{C} 
\in  \mathbb{R}^{N-r \times N-r}$ for some $0 < r < N$. 
The Nystr\"{o}m extension is a method for extending the eigenvectors of $\mathbf{E}$ to create an estimate of $r$  
eigenvectors of $\mathbf{K}$. 

Let $\mathbf{Z} \in \mathbb{R}^{r \times r}$ be the matrix whose columns are the eigenvectors of $\mathbf{E}$. 
As $\mathbf{E}$ contains the first $r$ rows and the first $r$ columns of $\mathbf{K}$, it is itself a symmetric matrix, thus
\begin{equation} \label{equ:diagA}
\mathbf{E} = \mathbf{Z} \mathbf{Q} \mathbf{Z}^T
\end{equation}
where $\mathbf{Q}$ is a diagonal matrix containing the eigenvalues of $\mathbf{E}$. 
The Nystr\"{o}m  extension of the matrix of eigenvectors of $\mathbf{K}$ is given by 
\begin{equation} \label{equ:Nystrom}
\tilde{\mathbf{Z}} = \begin{bmatrix} \mathbf{Z} \\
\mathbf{B} \mathbf{Z} \mathbf{Q}^{-1}
\end{bmatrix}.
\end{equation}
For more details  see \cite{fowlkes2004nystrom}.

We suggest that when computing $\mathbf{B Z Q}^{-1}$ all eigenvalues be approximated as ones, 
resulting in the following modification to~\eqref{equ:Nystrom}
\begin{equation} \label{equ:revised}
\tilde{\mathbf{Z}} = \begin{bmatrix} \mathbf{Z} \\
\mathbf{B} \mathbf{Z}
\end{bmatrix}.
\end{equation}
We motivate this approximation in  Appendix C.

\subsection{Nystr\"{o}m Interpolation}

The application of the Nystr\"{o}m extension to the Markov matrix 
is not straightforward. This is due to the fact that 
the Nystr\"{o}m extension is geared towards PSD matrices, and  
the Markov matrix $\mathbf{P}$ is not  PSD. However, as we show in Proposition~\ref{prop:1} (see Appendix A), 
the 
Markov matrix is 
strongly related to the normalized graph Laplacian $\mathbf{L}$ which is 
PSD, and defined by
\begin{equation}\label{equ:laplacian}
  \mathbf{L} = \mathbf{D}^{-\frac{1}{2}} \left(\mathbf{D} - \mathbf{W} \right)
  \mathbf{D}^{-\frac{1}{2}} = \mathbf{I}_N - \mathbf{D}^{-\frac{1}{2}}
  \mathbf{W} \mathbf{D}^{-\frac{1}{2}}
\end{equation}
where $\mathbf{I}_N$ is the $N \times N$ identity matrix. It is easy to see that 
  $\mathbf{D}^{-\frac{1}{2}} \mathbf{L} \mathbf{D}^{\frac{1}{2}} =
  \mathbf{I}_N - \mathbf{P}$.
Thus, $\mathbf{L}$ is similar to $\mathbf{I}_N - \mathbf{P}$. From Proposition~\ref{prop:1} (see Appendix A), 
the connection between the matrix of eigenvectors of the Markov matrix $\mathbf{V}$ 
and the matrix of eigenvectors of the Laplacian $\mathbf{U}$ is,  
\begin{equation} \label{equ:eigV1}
\tilde{\mathbf{V}} = \mathbf{D}^{-\frac{1}{2}}\tilde{\mathbf{Z}},
\end{equation}
and the matrix approximating its eigenvalues is
 \begin{equation} \label{equ:L}
 \tilde{\mathbf{\Lambda}} = \mathbf{I}_N - \mathbf{Q}.
  \end{equation}

To obtain an efficient graph signal interpolation algorithm we insert~\eqref{equ:eigV1} and~\eqref{equ:L} into~\eqref{equ:sparseOpt}. 
Specifically, $\psi_i$ is replaced by the $i$th column of $\tilde{\mathbf{V}}$, and $\lambda_i$ is replaced by $ \tilde{\mathbf{\Lambda}} _{i,i}$.

The difference between our two interpolation methods  
is only in the computation of the eigendecomposition of the Markov matrix. In our smoothness interpolation, presented in Section~\ref{sec:mainIdea}, 
the eigenvectors and eigenvalues of the full $N \times N$ matrix must be determined. 
On the other hand,  Nystr\"{o}m smoothness interpolation  uses the   eigendecomposition 
of an $r \times r$ matrix to approximate the 
eigenvectors and eigenvalues of the $N \times N$ matrix. When $r \ll N$ this method is \textcolor{blue}{}{computationally} efficient. We show in  Section \ref{subsec:Experiment} that 
this approach still achieves good accuracy in simulations on the MNIST dataset \cite{lecun1998mnist}.  We note that Nystr\"{o}m smoothness interpolation 
can also be done iteratively, as detailed in Algorithm~\ref{alg:it}. 

\textcolor{blue}{}{The Nystr\"{o}m extension is not meant to be used on graphs with sparse graph shift matrices (that is, few edges). Rather,  this method is geared towards dense graph shift matrices, in which case few samples are needed.  
We note that computationally efficient methods for graph signal interpolation over sparsely connected graphs through message passing have been suggested~\cite{jung2019message}. These methods are not computationally efficient for more densely connected graphs.}

\section{Experimental Results for Graph Signal Interpolation}
\label{subsec:Experiment}

\subsection{Synthetic Simulations}

In this section we present a qualitative comparison between our framework and those  of~\cite{keller2011regression, chen2016reconstruction, chen2015var, ma2015jmlr} for several signals and several sampling strategies. We use the Graph Signal Processing Toolbox \cite{perraudin2014gspbox} to produce the bunny graph depicted in Fig.~\ref{fig:bunny}. This graph contains $2503$ nodes and $27452$ edges connecting nearby nodes. In each experiment we use the Markov matrix as the graph shift operator.

 We further note that, as~\eqref{equ:sparseOpt} is an $l_1$ optimization problem, we use the SPGL1 package\footnote{https://github.com/mpf/spgl1.} \cite{vanderberg2008spg, vanderberg2010spg} 
to solve it. 

\begin{figure}
  \centering
  \subfigure[]{\includegraphics[width=0.3\linewidth]{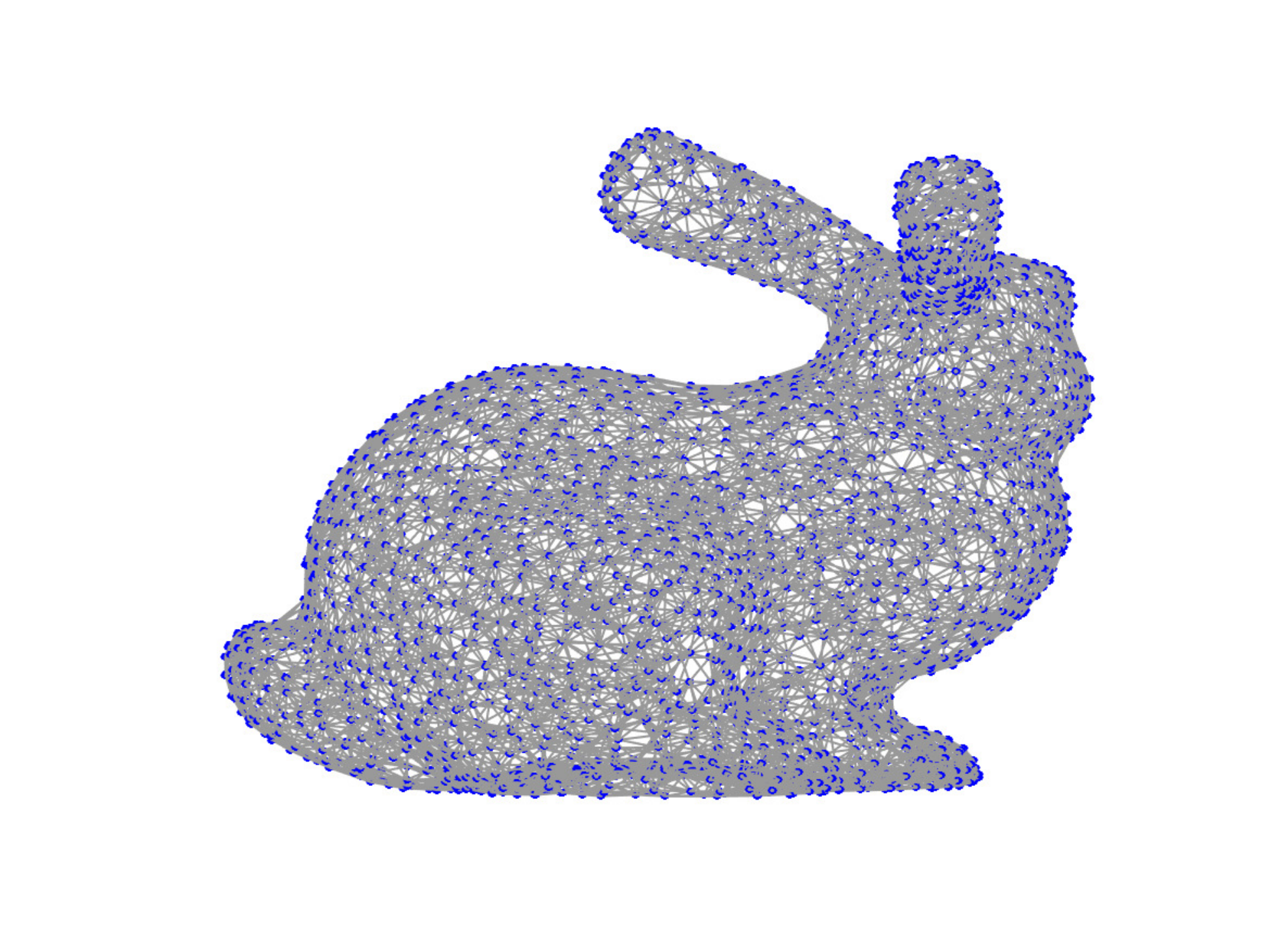}}
   \subfigure[]{\includegraphics[width=0.3\linewidth]{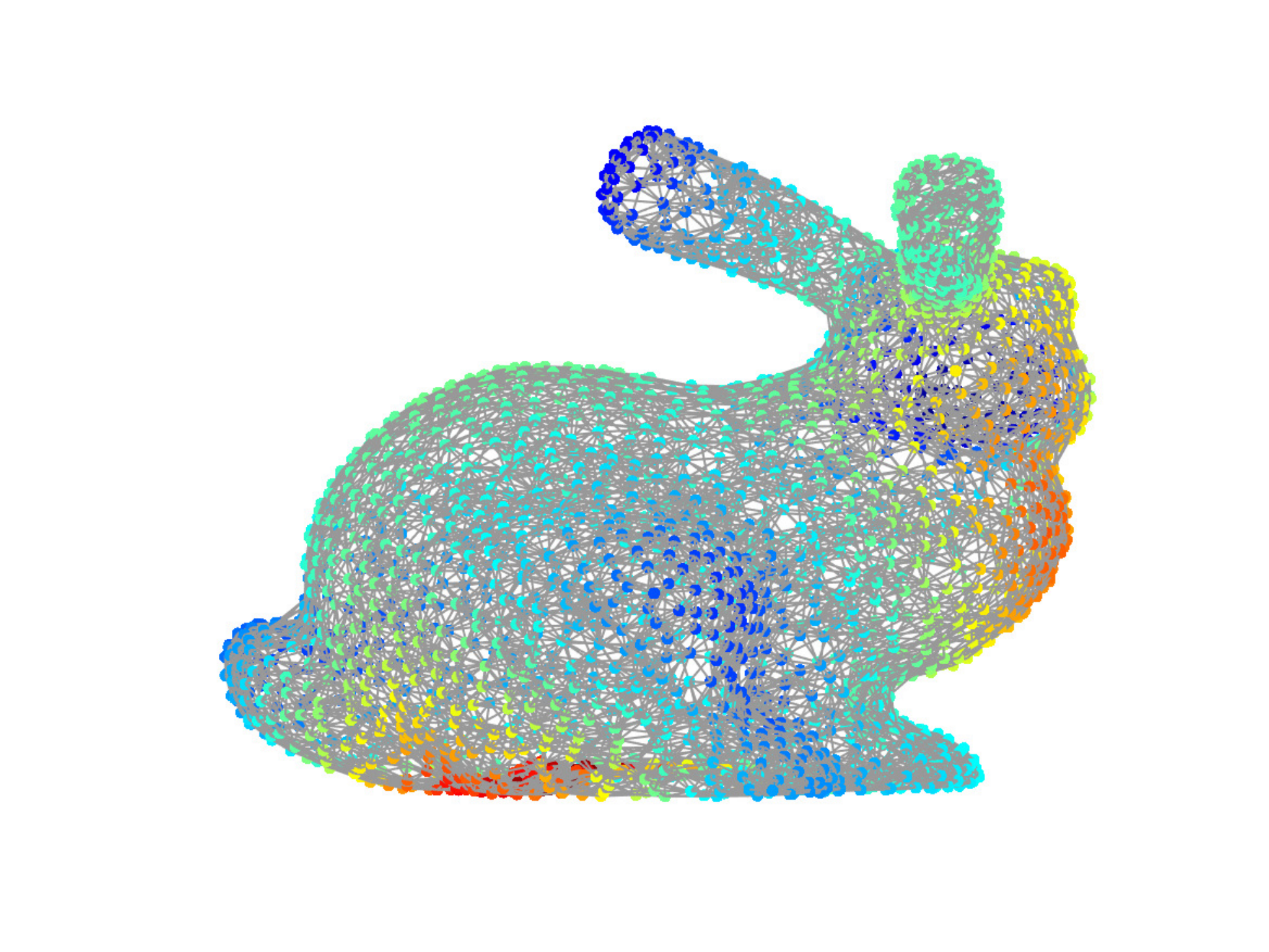} }
      \subfigure[]{\includegraphics[width=0.3\linewidth]{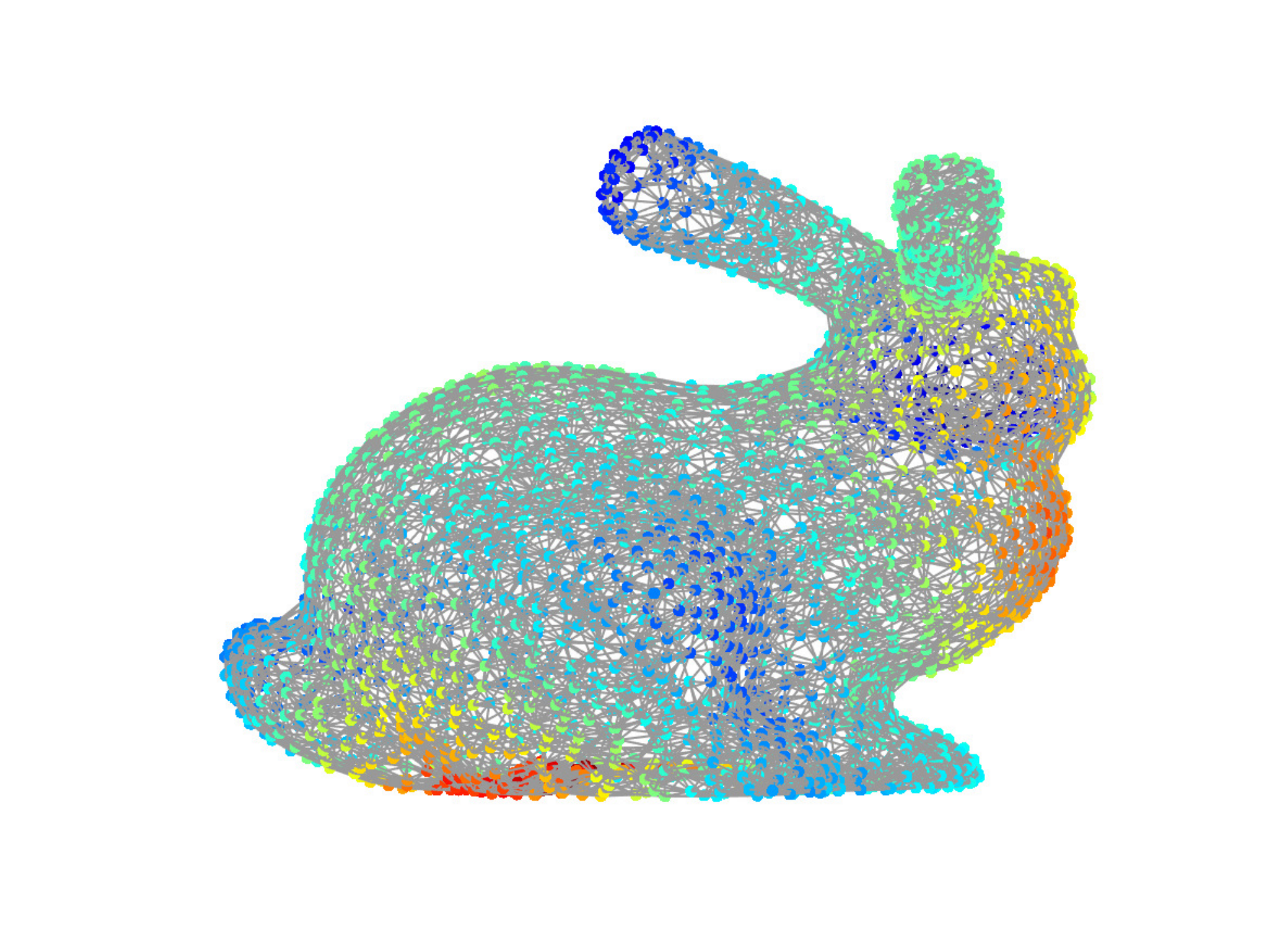} }
  \caption{Bunny graph as created by the Graph Signal Processing Toolbox \cite{perraudin2014gspbox}. (a) Nodes and edges of the graph. (b) Bandlimited graph signal. The color of each node corresponds to its signal. (c) Approximately bandlimited graph signal.}
  \label{fig:bunny}
\end{figure}

\subsubsection{Bandlimited graph signal}

We created a $20$-bandlimited graph signal. This was done by randomly generating the first twenty entries of the spectrum of the graph signal $\hat{\mathbf{x}}$ and setting the remaining entries to zero. The resulting graph signal is $$\mathbf{x} = \mathbf{V} \hat{\mathbf{x}},$$ and is depicted in Fig.~\ref{fig:bunny}. 

Next, we sample the nodes in the graph. We use two sampling strategies. The first is random sampling, where each node has a uniform probability of being sampled. Additionally we use the sampling procedure suggested in~\cite{chen2015sampling} (which, in this case, allows for perfect reconstruction). We compare the error of~\eqref{equ:sparseOpt} with the error of spectral regression, noiseless inpainting~\cite{chen2015var} and both reconstruction strategies discussed in~\cite{chen2016reconstruction} (where one of these was introduced in~\cite{ma2015jmlr}). We compute the error of each reconstruction method for varying sizes of sample sets. The error is defined as 
\begin{equation}
\left\lVert \frac{\mathbf{y}}{\Vert \mathbf{y} \Vert_2} - \frac{\mathbf{\hat{y}}} {\Vert \mathbf{\hat{y}} \Vert} \right\rVert ,
\label{equ:error1}
\end{equation} 
where the true signal is $\mathbf{y}$ and the estimate is denoted by $\mathbf{\hat{y}}$.

In Fig.~\ref{fig:bandlimit_bunny_k} we present interpolation results when the bandlimit is known (for all methods except noiseless inpainting~\cite{chen2015var} as this solution does not refer to the eigenvectors of the graph shift). In Fig.~\ref{fig:bandlimit_bunny_uk} we plot interpolation results when the bandlimit is unknown and assumed to be unlimited. 

We note that in the case of the optimal sampling operator~\cite{chen2015sampling}, our method is the only one that recovers the graph signal exactly both when the bandlimit is known and when the bandlimit is unknown. The reason for this is that in our method the bias towards smooth graph signals does not depend on any predetermined bandlimit.

\begin{figure}
  \centering
 \subfigure[]{\includegraphics[width=0.4\linewidth]{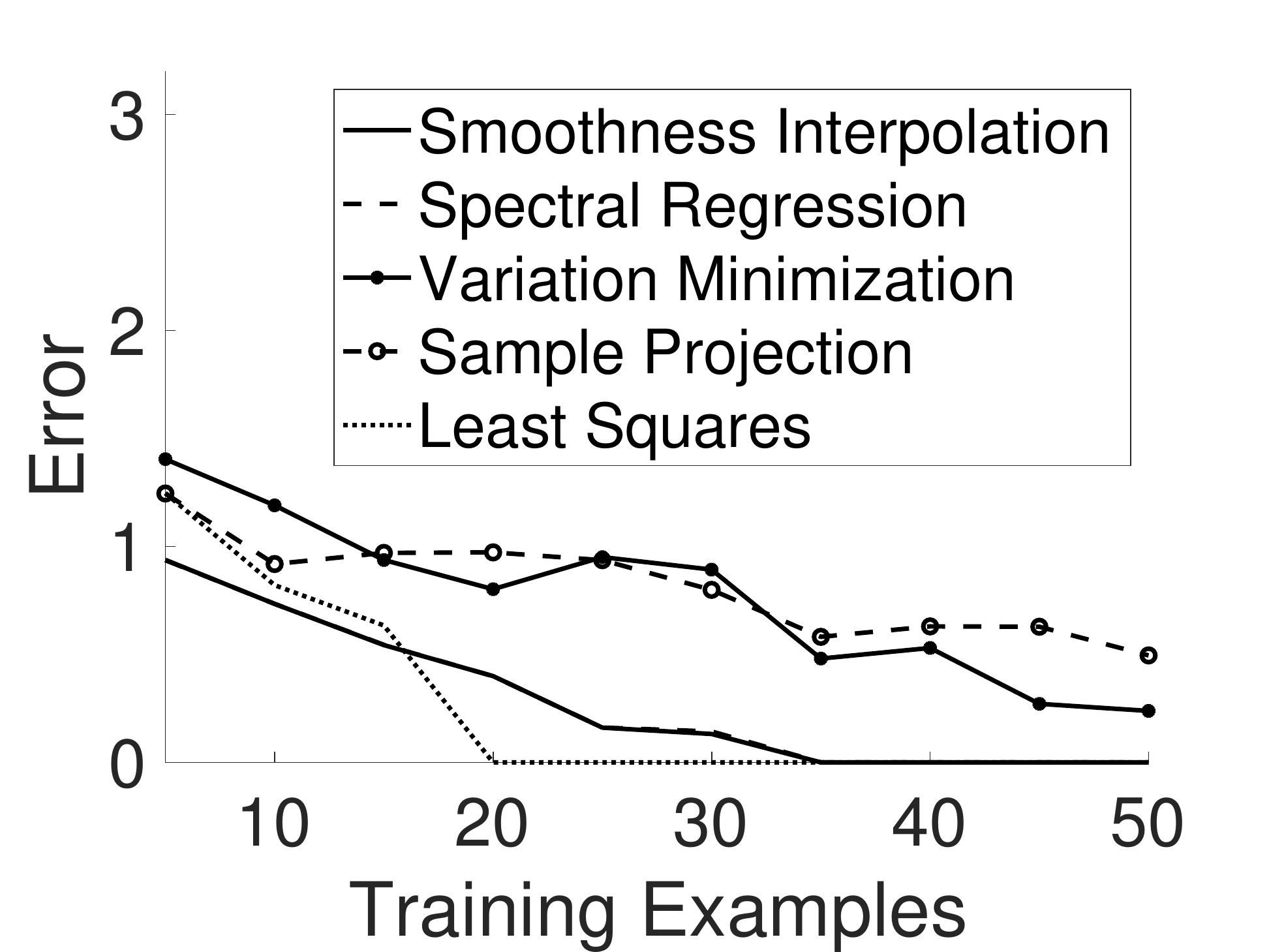}}
 \subfigure[]{\includegraphics[width=0.4\linewidth]{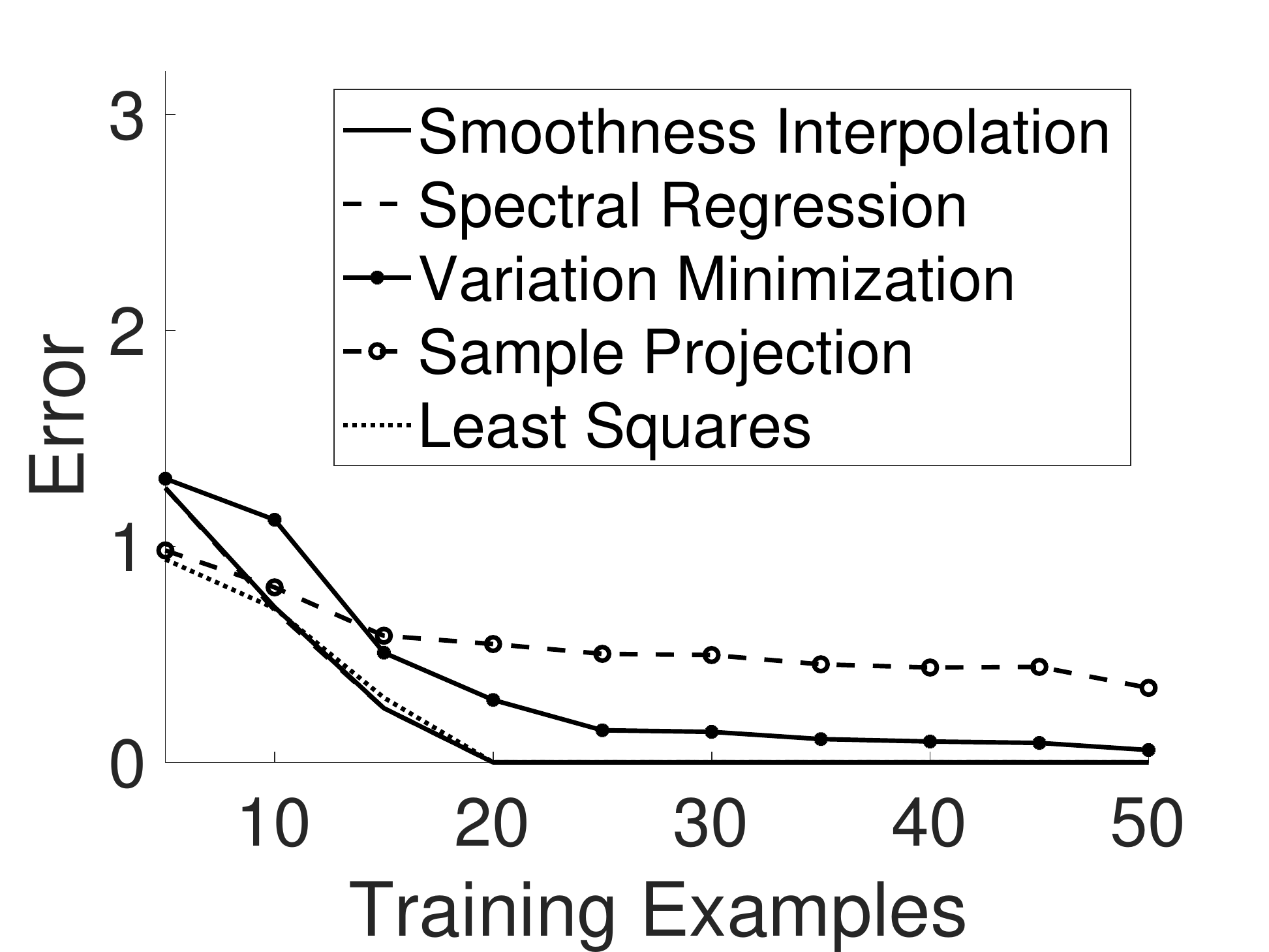}}
  \caption{Error rate~\eqref{equ:error1} for reconstruction of $20$-bandlimited graph signal when bandlimit is known. The result of~\eqref{equ:sparseOpt} is presented in blue. The results of spectral regression,~\cite{chen2015var},~\cite{chen2016reconstruction} and~\cite{ma2015jmlr}  are presented in red, yellow, purple and green, respectively. (a) Results for sampling set selected uniformly at random. (b) Results for sampling set selected as detailed in~\cite{chen2015sampling}.}
\label{fig:bandlimit_bunny_k}  
\end{figure}

\begin{figure}
  \centering
 \subfigure[]{\includegraphics[width=0.4\linewidth]{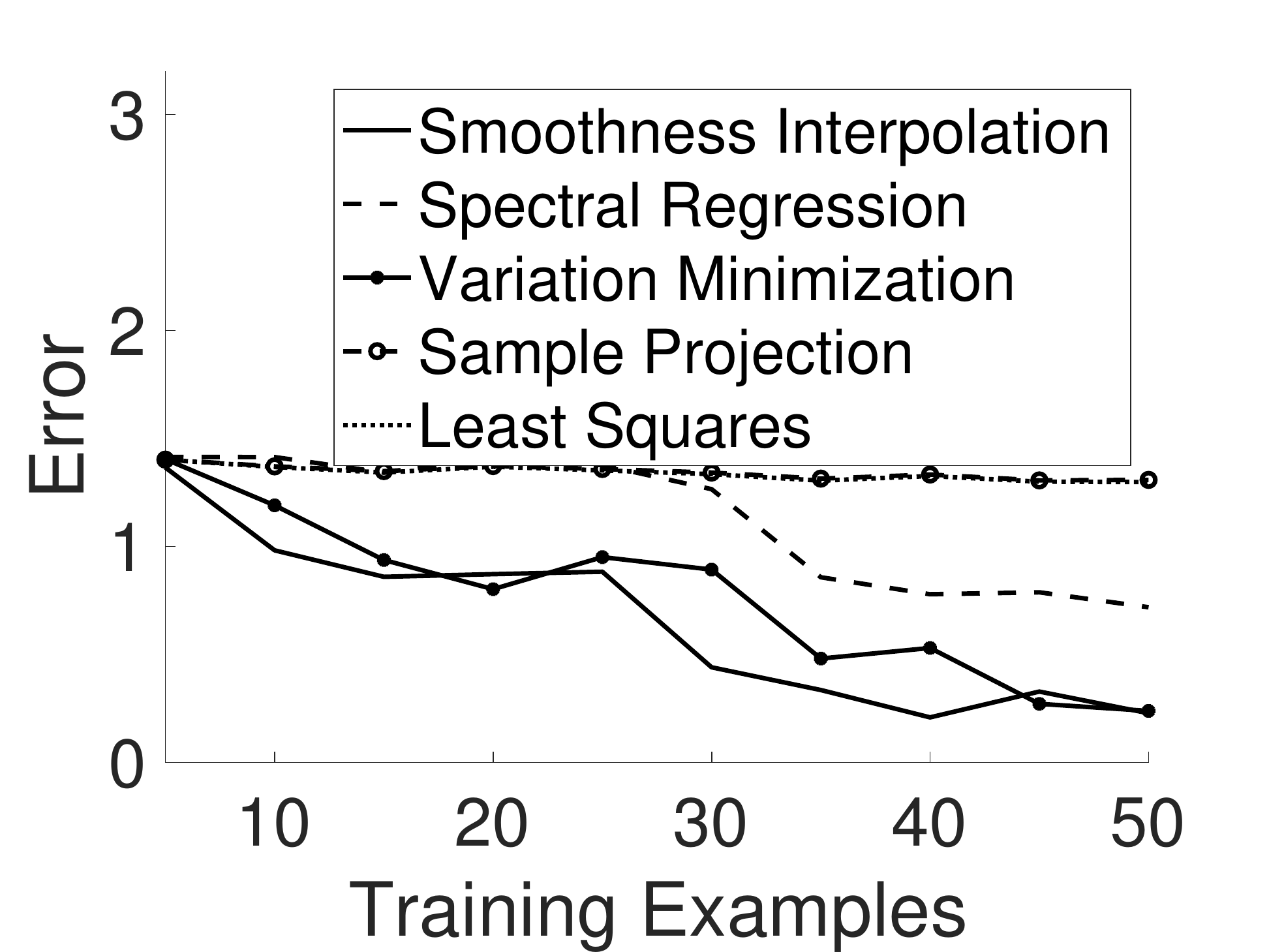}}
 \subfigure[]{\includegraphics[width=0.4\linewidth]{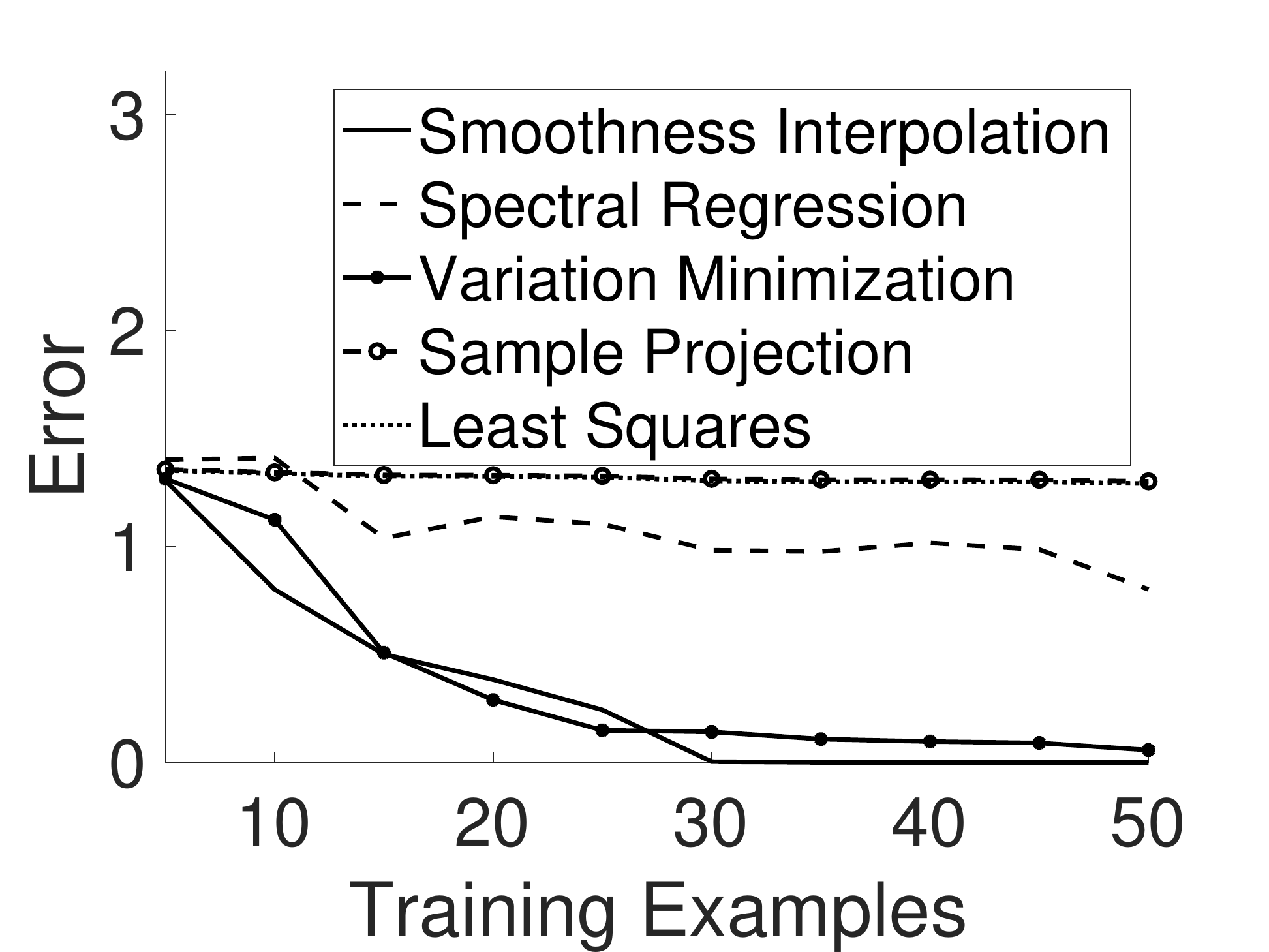}}
  \caption{Error rate~\eqref{equ:error1} for reconstruction of $20$-bandlimited graph signal when bandlimit is unknown. The result of~\eqref{equ:sparseOpt} is presented in blue. The results of spectral regression,~\cite{chen2015var},~\cite{chen2016reconstruction} and ~\cite{ma2015jmlr}  are presented in red, yellow, purple and green, respectively. (a) Results for sampling set selected uniformly at random. (b) Results for sampling set selected as detailed in~\cite{chen2015sampling}.}
  \label{fig:bandlimit_bunny_uk}
\end{figure}

\subsubsection{Approximately bandlimited graph signals}

In reality, graph signals are often approximately bandlimited. We simulate such signals by adding a noise vector (generated uniformly at random) to the spectrum of our $20$-bandlimited  graph signal. We interpolate the graph signal under the assumption that the signal is $20$-bandlimited (Fig.~\ref{fig:approx_bandlimit_bunny_k}) and under the assumption that the signal is not bandlimited (Fig.~\ref{fig:approx_bandlimit_bunny_uk}). We note that, as case of bandlimited graph signals,~\cite{chen2015var} makes no assumption on the bandlimit. We conclude that our interpolation is successful especially when no assumptions are made on the bandlimit. This property stems from a built-in bias towards negligible values in the high frequencies of the interpolated graph signal.

\begin{figure}
  \centering
 \subfigure[]{\includegraphics[width=0.4\linewidth]{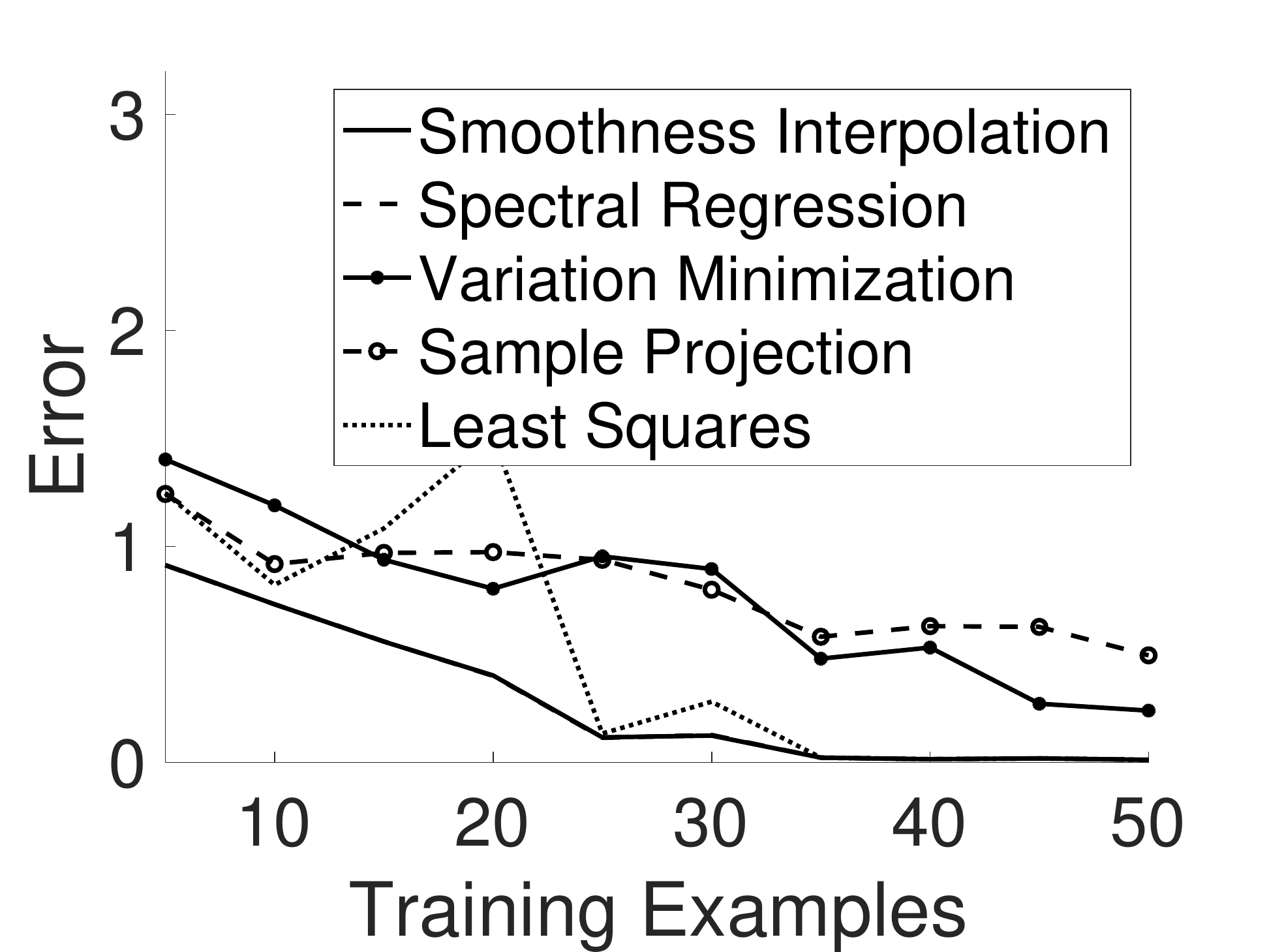}}
 \subfigure[]{\includegraphics[width=0.4\linewidth]{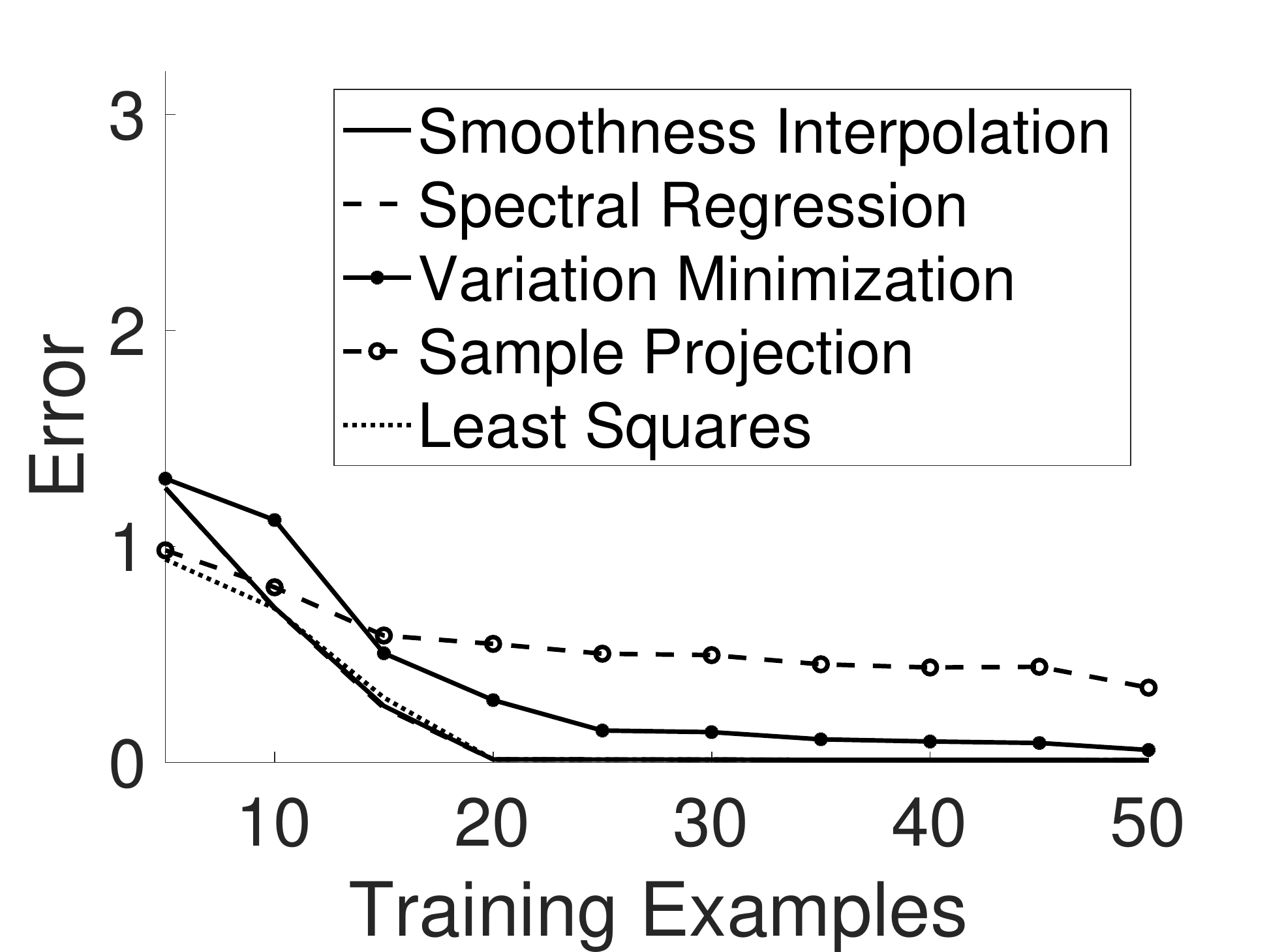}}
  \caption{Error rate~\eqref{equ:error1} for reconstruction of an approximately $20$-bandlimited graph signal. The bandlimit is assumed to be $20$. The result of~\eqref{equ:sparseOpt} is presented in blue. The results of spectral regression,~\cite{chen2015var},~\cite{chen2016reconstruction} and ~\cite{ma2015jmlr}  are presented in red, yellow, purple and green, respectively. (a) Results for sampling set selected uniformly at random. (b) Results for sampling set selected as detailed in~\cite{chen2015sampling}.}
  \label{fig:approx_bandlimit_bunny_k}
\end{figure}

\begin{figure}
  \centering
 \subfigure[]{\includegraphics[width=0.4\linewidth]{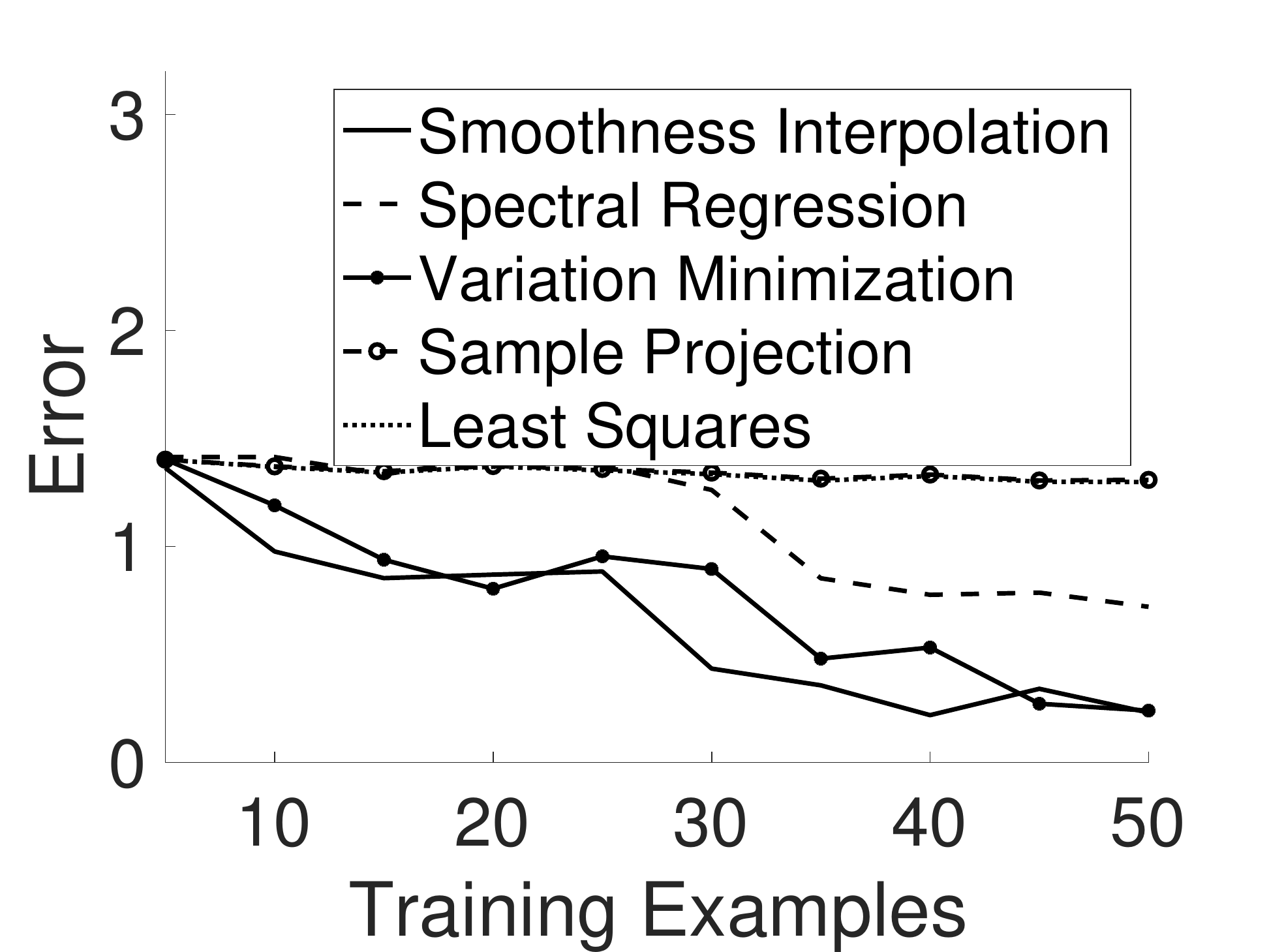}}
 \subfigure[]{\includegraphics[width=0.4\linewidth]{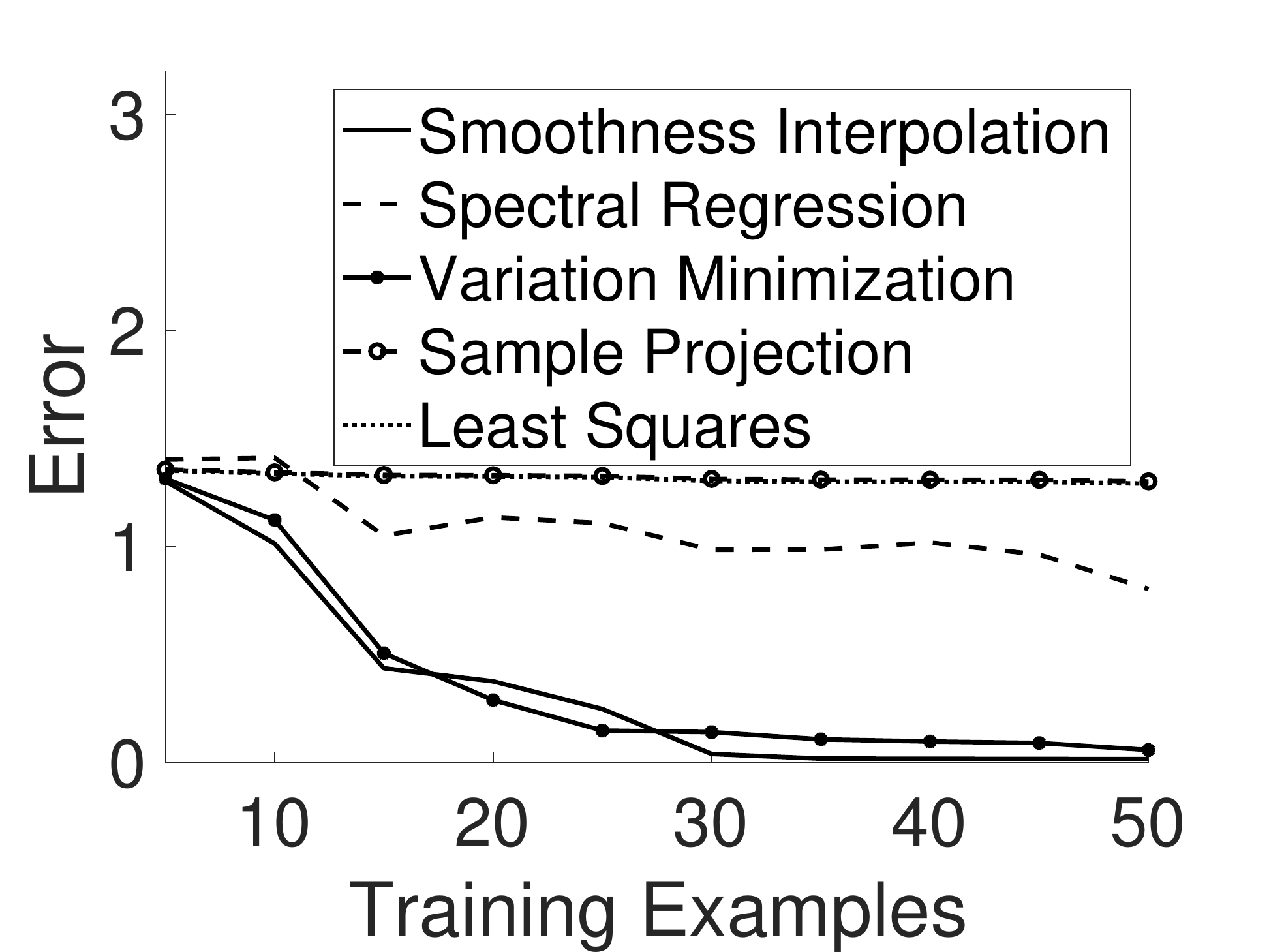}}
  \caption{Error rate~\eqref{equ:error1} for reconstruction of an approximately $20$-bandlimited graph signal. The bandlimit is assumed to be unknown. The result of~\eqref{equ:sparseOpt} is presented in blue. The results of spectral regression,~\cite{chen2015var},~\cite{chen2016reconstruction} and ~\cite{ma2015jmlr}  are presented in red, yellow, purple and green, respectively. (a) Results for sampling set selected uniformly at random. (b) Results for sampling set selected as detailed in~\cite{chen2015sampling}.}
  \label{fig:approx_bandlimit_bunny_uk}
\end{figure}

We conclude that, whether the bandlimit is known or not, our method is among the state-of-the-art methods that achieve the lowest error rate.

\subsection{MNIST dataset}
In the previous Section we presented results on synthetic data. We now compare our methods~\eqref{equ:sparseOpt} and Algorithm~\ref{alg:it} to~\cite{keller2011regression, chen2016reconstruction, chen2015var, ma2015jmlr} on the MNIST dataset~\cite{lecun1998mnist} of handwritten digits . This dataset includes $60000$ training images and $10000$ test images. We solve a clustering problem where 
the goal is to associate each image with  the  digit it depicts.
 
We formulate the clustering problem  as a graph signal interpolation problem, in the manner detailed in \cite{chen2015sampling}. 
Specifically, we select $1000$ images of each digit and represent this reduced set as an undirected weighted graph, wherein  
each image $x_i$ is represented by a single node $v_i$. 
The Euclidean distances between vectorizations 
of the  images are used as a distance measure between their respective nodes. 
We denote the matrix of pairwise distances as $\mathbf{F}$.  

We keep only the $L=12$ smallest  entries 
for each row of $\mathbf{F}$, and denote by $\mathcal{M}_n$ the indices of the $L$ smallest entries for each row $n$. 
The weight of an edge between  two images is defined as
\begin{equation}
W_{i,j} = \begin{cases}
\frac{F_{i,j} \cdot N^2}{\sum_{n=1}^{N} \sum_{m \in \mathcal{M}_n} F_{n,m}} & \text{for } j \in \mathcal{M}_i \\
0 &  \text{for } j \notin \mathcal{M}_i
\end{cases},
\label{equ:WF}
\end{equation}
where $N$ is the number of nodes in the graph. The graph shift operator is the Markov transition matrix.

We 
define ten smooth graph signals $\mathbf{s^0}, \dots, \mathbf{s^9}$ as follows,
\begin{equation}
s_i^k = \begin{cases}
1 & \text{if } {v}_i \text{ represents an image of the digit } k,\\
0 & \text{otherwise}.
\end{cases}
\end{equation}
Each of these signals is known over a subset of $r$ nodes in the training set. 
The nodes in this set are determined according to the sampling suggested by Chen \textit{et al.} (\cite{chen2015sampling} Algorithm 1).  
Our goal is to recover the  signal over the remaining $10000 - r$ nodes. In order to do this, we 
interpolate each signal $\mathbf{s^0},\dots,\mathbf{s^9}$ independently.  We then map node 
${v}_i$ to the scalar value $k \in \{0, 1, ..., 9\}$ when our interpolated signals satisfy 
$\vert s^k_i \vert > \vert s^m_i \vert$ for all $k \neq m$.  

We present in Fig. \ref{fig:jelena_comp1}   a comparison between our suggested smoothness interpolation~\eqref{equ:sparseOpt}, its iterative extension (Algorithm~\ref{alg:it}),  the interpolation suggested by Chen \textit{et al.}~\eqref{equ:chen} \cite{chen2015sampling} and the method of Jung \textit{et al.} \cite{jung2019message}.\footnote{
{We note that 
in order for this method to be successful, the boundary of clusters (or nodes close to the boundary) needs to be sampled. 
In our experimental setup, between 10 and 100 nodes are sampled, and the sampling approach does not take into account connection to the boundary. Since there are 10 clusters, 
 we do not get a good representation of the cluster borders. Therefore, the results of this experiment will improve when using more samples and a different 
 sampling strategy. In addition, as this method can be implemented in a distributive manner, it is very fast.}} 
This figure plots the percent of correctly interpolated entries of the graph signal as a function of 
the number of training examples ($r$). Mathematically, this is
\begin{equation*}
\frac{100}{N} \cdot TP,
\end{equation*}
where $TP$ are the number of correctly interpolated entries of the the graph signal. 

Clearly, our iterative algorithm achieves the highest accuracy. 
In addition, when interpolating a graph signal ${s} \in \mathbb{R}^{10000}$ from $20-30$ samples, 
this method far outperforms \cite{chen2015sampling, jung2019message}. Our non-iterative smoothness 
interpolation~\eqref{equ:sparseOpt} also outperforms \cite{chen2015sampling, jung2019message}.

\begin{figure}
  \centering
  {\includegraphics[width=0.7 \linewidth]{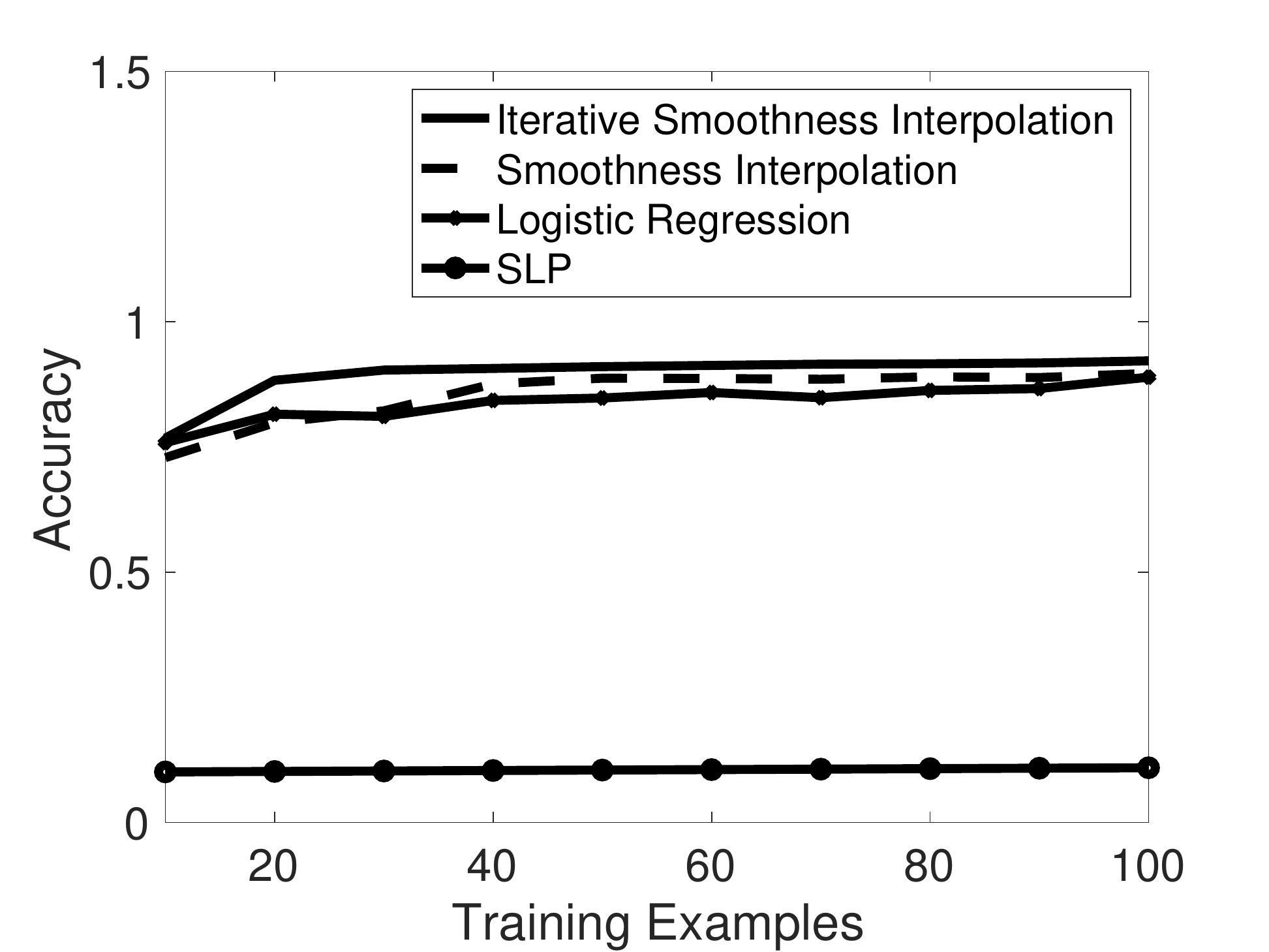} }
  \caption{Graph signal interpolation on the MNIST data set. We present size of training set ($r$) vs. accuracy of graph signal reconstruction. 
  The training set is selected according to \cite{chen2015sampling}. 
  Results for 
  our Markov variation based method~\eqref{equ:sparseOpt} are in blue. Results of the logistic regression optimization of \cite{chen2015sampling} \eqref{equ:chen}, are presented in red. Results for the total variation-based optimization \cite{chen2015sampling} are presented in yellow. Results for our iterative 
  Markov variation based method (Algorithm \ref{alg:it}) are presented in purple.}
  \label{fig:jelena_comp1}
\end{figure}

Next, we show the speed-up of our Nystr\"{o}m-based interpolation method in comparison with~\eqref{equ:sparseOpt} and~\eqref{equ:sparseOpt2}. 
For this comparison we 
build a graph from the full $70000$ training and test images in the MNIST dataset. The graph is built as described above, with two 
small modifications. First, we keep $L=200$ nearest neighbors for each node. In addition, 
we  symmetrize the affinity matrix~\eqref{equ:WF} as
\begin{equation}
W^s_{i,j} = \max \left(W_{i,j}, W_{j, i} \right)
\label{equ:symW}
\end{equation}
before calculating the graph shift operator $\mathbf{P}=\mathbf{D}^{-1} \mathbf{W}^s$.

Fig. \ref{fig:results1} presents a comparison of  accuracy and runtime between 
our Nystr\"{o}m-based interpolation method,~\eqref{equ:sparseOpt} and~\eqref{equ:sparseOpt2}.  
As in Fig. \ref{fig:jelena_comp1}, the percent of correctly interpolated entries of the graph signal is presented as a function of 
the number of training examples ($r$).  
Since two of these methods 
calculate the eigenvectors exactly, and since we cannot compute the eigendecomposition of a $70000 \times 70000$ Markov 
 matrix, we reduce the size of the graph to  $r$  randomly chosen (sampled)  nodes from the training set and all $10000$ nodes from the test set.  
 The interpolation is done over this reduced graph, and results in an estimation of the graph signal over the test set. 

\begin{figure}
  \centering
  \subfigure[]{\includegraphics[width=0.7 \linewidth]{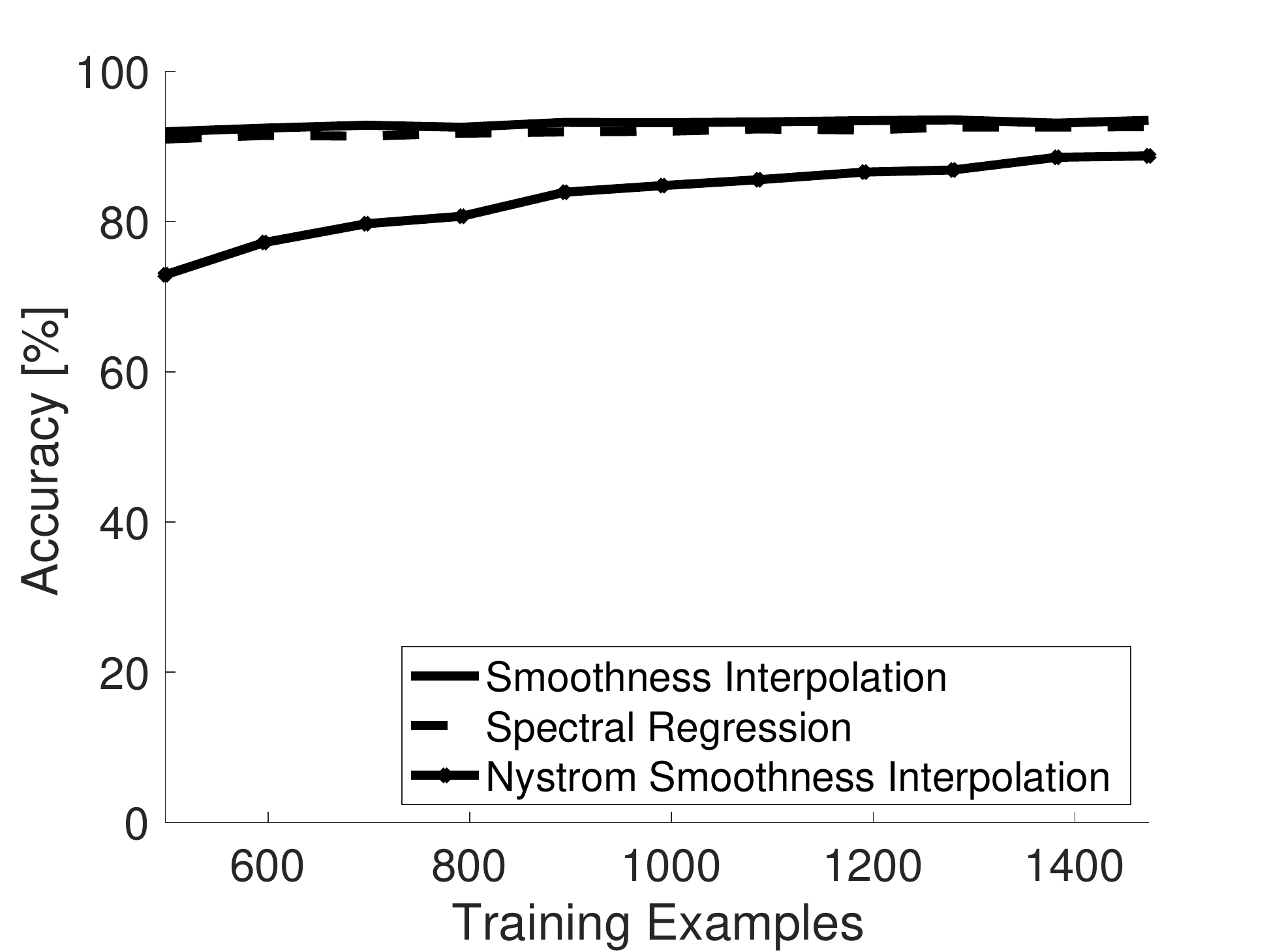} }
   \subfigure[]{\includegraphics[width=0.7 \linewidth]{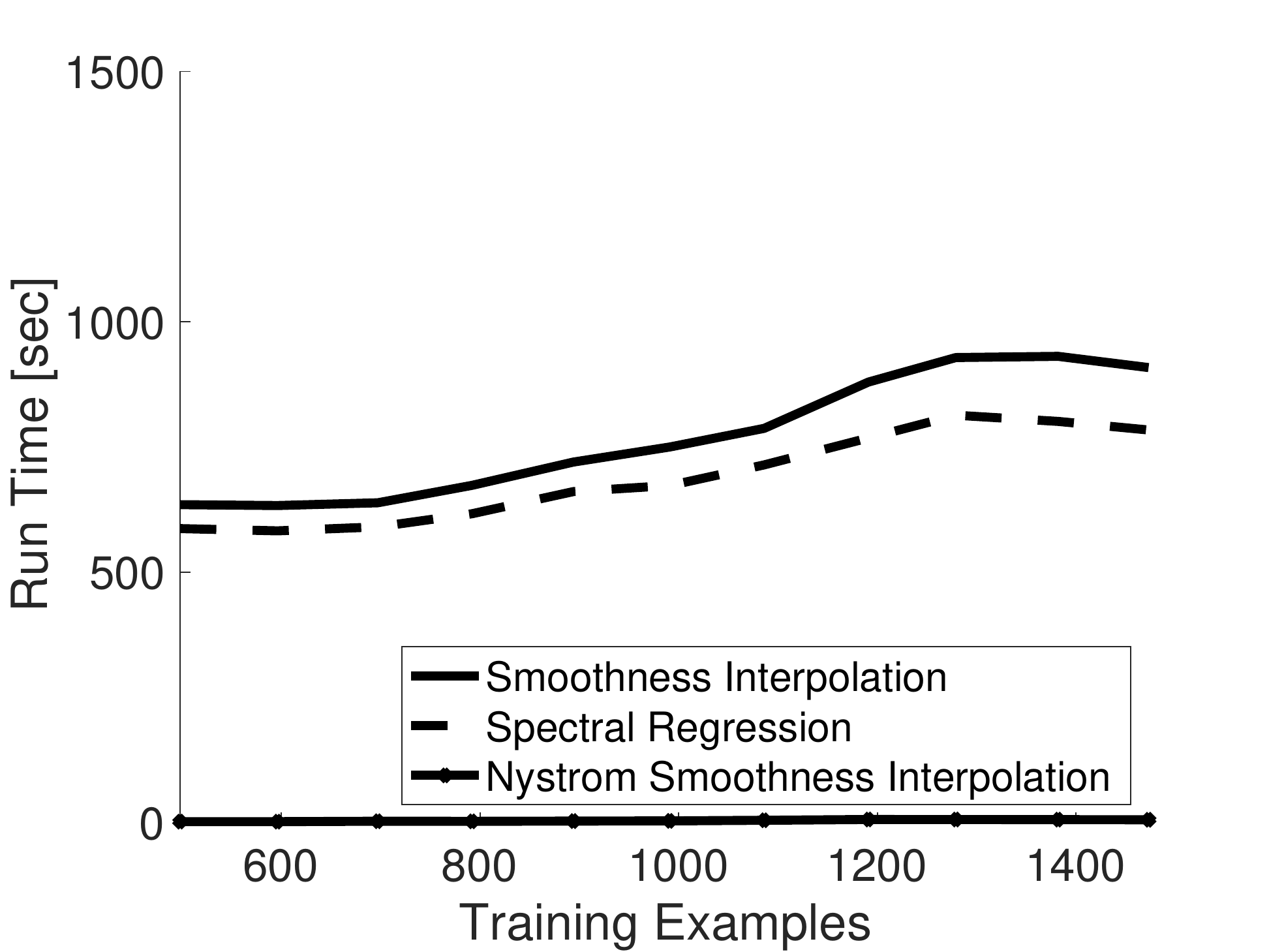} }
  \caption{Graph signal interpolation on the MNIST data set. Results for the Nystr\"{o}m optimization method are presented in blue. Results of 
  our Markov variation based method~\eqref{equ:sparseOpt} are in red. Results of spectral regression~\eqref{equ:sparseOpt2} are presented in yellow. (a) size of training set ($r$) vs. accuracy graph signal reconstruction.  
  (b) size of training set ($r$) vs. total time for graph signal reconstruction. }
  \label{fig:results1}
\end{figure}

Our interpolation method~\eqref{equ:sparseOpt} consistently achieves the highest accuracy, while our Nystr\"{o}m-based 
method has  reduced run time while maintaining high accuracy.

 In summary, our smoothness interpolation method~\eqref{equ:sparseOpt}, and its iterative extension (Algorithm \ref{alg:it}) outperform 
 the graph signal interpolation methods of  \cite{chen2015sampling, jung2019message, keller2011regression} on the MNIST dataset. In addition, our 
 variation on the  Nystr\"{o}m extension achieves good accuracy while allowing to quickly interpolate many entries of very large 
 graph signals. 
 
 We note that in the above experiments the optimization problem~\eqref{equ:sparseOpt} was solved over the $r$ leading eigenvectors. This 
 is due to the fact that the Nystr\"{o}m smoothness interpolation is limited to $r$ eigenvectors.

\subsection{Temperature measurements}

In the previous section we dealt with a clustering problem, where the graph signal was a labeling of the nodes. In general 
similar clusters need not 
have similar labels. For example, while the digits 3 and 8 are similar, their labels are not. We now consider a regression problem, where the graph signal 
is a quantity rather than a label. 

We turn to a dataset of average temperatures as measured by $2181$ sensors across the contiguous United States on January 1st, 2011 \cite{gsod2011dataset}. 
The dataset contains both longitude, latitude and elevation of each sensor. Following \cite{Sandryhaila2013shift}, we represent each sensor as a node in a K-nearest 
neighbors graph. Edge weights are defined according to (\cite{Sandryhaila2013shift} eq (26)), 
\begin{equation}
A_{n,m} = \frac{e^{-d_{n,m}^2}}{\sqrt{\sum_{k \in \mathcal{N}_n} e^{-d^2_{n,k}} \sum_{l \in \mathcal{N}_m} e^{-d^2_{n,l}}}},
\end{equation}
where $d_{n,m}$ denotes the geodesic distance between node $n$ and node $m$. As we restrict the discussion to undirected graphs, 
the affinity matrix is, 
\begin{equation}
W_{n,m} =\max \left(A_{n,m}, A_{m,n} \right).
\end{equation}

Fig. \ref{fig:results3}  presents a comparison in terms of error between our iterative optimization method  (Algorithm~\ref{alg:it})  and 
spectral regression \cite{keller2011regression} as well as \cite{ma2015jmlr, chen2016reconstruction, chen2015var}. As this is a relatively small dataset, there is no need to compare to our fast interpolation method. Additionally, in each method, we do not make any assumption on the bandlimit of the graph signal. 
We compute the error as 
\begin{equation}
\Vert \mathbf{y} - \mathbf{\hat{y}} \Vert_2 / \Vert \mathbf{y} \Vert_2
\end{equation}
where $\mathbf{y}$ is the true graph signal and $\mathbf{\hat{y}}$ is the interpolation. 

In Fig.~\ref{fig:results4}  we provide a comparison when we assume a bandlimit of $9$ in the methods~\cite{keller2011regression, chen2016reconstruction, ma2015jmlr}. It is clear from Figs.~\ref{fig:results3} and \ref{fig:results4} that, when the dataset is small enough that exact computation of the  eigendecomposition of the graph shift 
matrix is feasible,  our suggested interpolation method far 
outperforms  state-of-the-art methods~\cite{keller2011regression, ma2015jmlr, chen2016reconstruction, chen2015var}. 

\begin{figure}
  \centering
{\includegraphics[width=0.7 \linewidth]{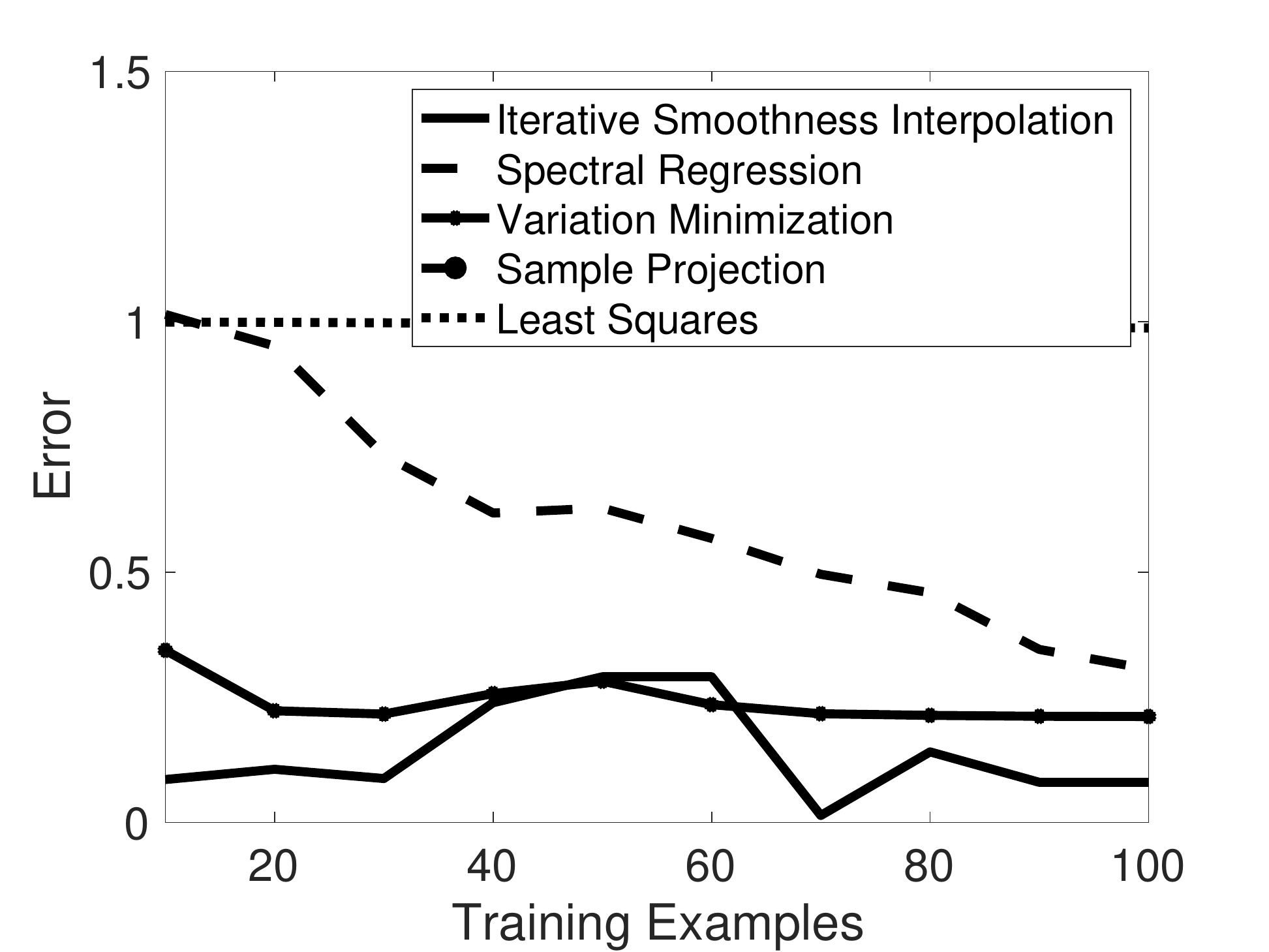}}
  \caption{Graph signal interpolation over $2151$ weather sensors scattered across the contiguous United States. 
  The graph was built using $K=10$ nearest neighbors. No assumptions were made about the bandlimit of the signal. The result of Algorithm~\ref{alg:it} is presented in blue. The results of spectral regression,~\cite{chen2015var},~\cite{chen2016reconstruction} and ~\cite{ma2015jmlr}  are presented in red, yellow, purple and green, respectively.}
  \label{fig:results3}
\end{figure}

\begin{figure}
  \centering
{\includegraphics[width=0.7 \linewidth]{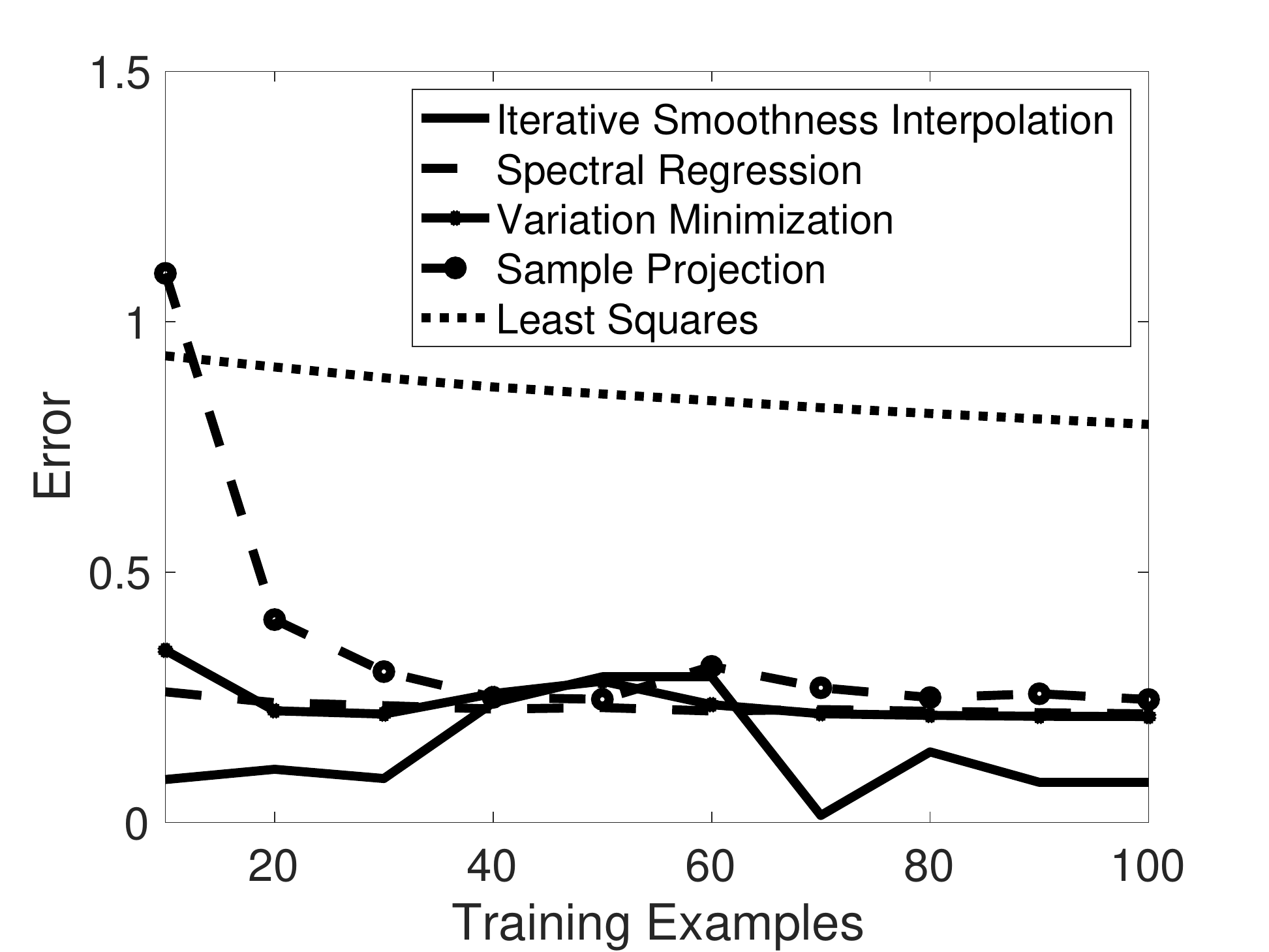}}
  \caption{Graph signal interpolation over $2151$ weather sensors scattered across the contiguous United States. 
  The graph was built using $K=10$ nearest neighbors. The graph signal is assumed to have a bandlimit of $9$. The result of Algorithm~\ref{alg:it} is presented in blue. The results of spectral regression,~\cite{chen2015var},~\cite{chen2016reconstruction} and ~\cite{ma2015jmlr}  are presented in red, yellow, purple and green, respectively.}
  \label{fig:results4}
\end{figure}

\section{Conclusion}

The field of signal processing on graphs 
strives to generalize definitions and operations from
signal processing to data represented by a graph. An important definition in this 
field is the graph shift operator.  In 
this paper we define the graph shift operator to be the
Markov matrix and use this definition to formulate the Markov variation,  a   
smoothness measure for graph signals. This measure is closely related to the diffusion embedding vectors of the 
nodes of the graph. 

We use the Markov variation to derive a method for interpolation of smooth graph signals. This is done by defining a system of linear 
equations derived from the Markov variation. Since this system may be underdetermined, we select the solution with minimal $l_1$ norm.  
{This method naturally extends to an iterative interpolation, where each iteration either leaves the solution unchanged or returns a smoother 
solution. }
We experimentally verify our interpolation methods over the MNIST dataset of handwritten digits and over a dataset of temperature 
measurements across the contiguous United States. We show {that} our method outperforms state-of-the-art interpolation methods such 
as \cite{keller2011regression} on both these data sets. 

In addition, we utilize the Nystr\"{o}m extension for a computationally efficient solution of the aforementioned minimization problem.  
We show that our efficient approximation achieves good results on the MNIST data set in {greatly} reduced runtimes.

\section*{Ackowledgements}
The authors would like to thank Antonio Ortega for his helpful discussions on the nature of graph signals.

\section*{Appendix A}

In this Appendix we provide a summary of some key properties of the Markov matrix along with their proof. These properties were used in the analysis of the Markov variation and the derivation of the graph signal interpolation method~\eqref{equ:sparseOpt}.

\begin{lemma} 
The Markov matrix of a connected and undirected graph is diagonalizable.
\end{lemma}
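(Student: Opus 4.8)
The plan is to show that the Markov matrix $\mathbf{P} = \mathbf{D}^{-1}\mathbf{A}$ is \emph{similar} to a real symmetric matrix and then invoke the spectral theorem, which guarantees that every real symmetric matrix is orthogonally diagonalizable. Since similarity preserves the Jordan structure, diagonalizability then transfers back to $\mathbf{P}$. The symmetric matrix I would use is the normalized affinity
\begin{equation*}
\mathbf{S} = \mathbf{D}^{-\frac{1}{2}} \mathbf{A} \mathbf{D}^{-\frac{1}{2}}.
\end{equation*}
Because the graph is undirected we have $\mathbf{A} = \mathbf{A}^T$, and since $\mathbf{D}^{-1/2}$ is diagonal (hence symmetric), a one-line computation gives $\mathbf{S}^T = \mathbf{S}$.

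First I would verify that $\mathbf{D}^{\pm 1/2}$ are well defined and invertible. This is precisely where connectedness enters: in a connected graph every node has at least one incident edge, so every diagonal entry $d(v_i)$ of $\mathbf{D}$ is strictly positive, and the diagonal matrices $\mathbf{D}^{\pm 1/2}$ exist and are invertible. I would then record the similarity relation
\begin{equation*}
\mathbf{D}^{-\frac{1}{2}} \mathbf{S} \mathbf{D}^{\frac{1}{2}} = \mathbf{D}^{-\frac{1}{2}} \mathbf{D}^{-\frac{1}{2}} \mathbf{A} \mathbf{D}^{-\frac{1}{2}} \mathbf{D}^{\frac{1}{2}} = \mathbf{D}^{-1} \mathbf{A} = \mathbf{P},
\end{equation*}
so that $\mathbf{P} = \mathbf{T}^{-1}\mathbf{S}\mathbf{T}$ with the invertible conjugating matrix $\mathbf{T} = \mathbf{D}^{1/2}$; that is, $\mathbf{P}$ and $\mathbf{S}$ are similar.

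To finish, I would apply the spectral theorem to the real symmetric matrix $\mathbf{S}$, obtaining an orthogonal $\mathbf{Z}$ and a real diagonal $\mathbf{\Lambda}$ with $\mathbf{S} = \mathbf{Z}\mathbf{\Lambda}\mathbf{Z}^T$. Substituting into the similarity relation yields the explicit eigendecomposition
\begin{equation*}
\mathbf{P} = \bigl(\mathbf{D}^{-\frac{1}{2}}\mathbf{Z}\bigr)\,\mathbf{\Lambda}\,\bigl(\mathbf{D}^{-\frac{1}{2}}\mathbf{Z}\bigr)^{-1},
\end{equation*}
which exhibits $\mathbf{P}$ as diagonalizable with real eigenvalues (the diagonal of $\mathbf{\Lambda}$) and eigenvectors given by the columns of $\mathbf{V} = \mathbf{D}^{-1/2}\mathbf{Z}$, consistent with the relation used later in~\eqref{equ:eigV1}. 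I do not expect a genuine obstacle, as this is a standard symmetrization argument; the only point deserving care is the role of the hypothesis. Connectedness is invoked solely to guarantee that all degrees are positive (equivalently, that there are no isolated nodes), which is exactly what makes $\mathbf{D}^{1/2}$ invertible, so the conclusion in fact holds verbatim for any undirected graph without isolated vertices.
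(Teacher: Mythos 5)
Your proof is correct and takes essentially the same route as the paper: both conjugate $\mathbf{P}$ by $\mathbf{D}^{1/2}$ to obtain a real symmetric matrix (your $\mathbf{S} = \mathbf{D}^{-\frac{1}{2}}\mathbf{A}\mathbf{D}^{-\frac{1}{2}}$ is exactly the paper's $\mathbf{I}_N - \mathbf{L}$) and then invoke diagonalizability of symmetric matrices plus invariance under similarity. Your only addition is making explicit that connectedness guarantees positive degrees so that $\mathbf{D}^{\pm\frac{1}{2}}$ exist, a point the paper leaves implicit.
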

\begin{proof}
The normalized graph Laplacian is defined as
\begin{equation}
\label{equ:define}
\mathbf{L} = \mathbf{D}^{-\frac{1}{2}} \left( \mathbf{D} - \mathbf{W} \right) \mathbf{D}^{-\frac{1}{2}} = \mathbf{I}_N - \mathbf{D}^{-\frac{1}{2}} \mathbf{W} \mathbf{D}^{-\frac{1}{2}},
\end{equation}
and the Markov matrix is defined as 
\begin{equation} \label{equ:similar}
\mathbf{P} = \mathbf{D}^{-1} \mathbf{W} =  \mathbf{D}^{-\frac{1}{2}} \left(  \mathbf{D}^{-\frac{1}{2}} \mathbf{W}  \mathbf{D}^{-\frac{1}{2}} \right)  \mathbf{D}^{\frac{1}{2}}.
\end{equation}
It follows from (\ref{equ:define}) and (\ref{equ:similar}) that
\begin{equation}
\label{equ:similar3}
 \mathbf{P} =  \mathbf{D}^{-\frac{1}{2}} \left(  \mathbf{I}_N -  \mathbf{L} \right)  \mathbf{D}^{\frac{1}{2}}.
\end{equation}
This means that the Markov matrix is similar to $\mathbf{I}_N - \mathbf{L}$. The normalized graph Laplacian is a symmetric matrix and is thus diagonalizable. 
The same is true for $\mathbf{I}_N - \mathbf{L}$. As $\mathbf{P}$ is similar to a diagonalizable matrix, it is also diagonalizable.
\end{proof}

Since $\mathbf{P} \in \mathbb{R}^{N \times N}$ is diagonalizable, it has $N$ eigenvalues and $N$ eigenvectors. In the proposition below we 
denote the eigenvectors and eigenvalues of $\mathbf{P}$ as  $\{ \psi_i \}_{i=1}^{N}$  and 
 $\{ \lambda_i \}_{i=1}^{N}$, respectively. We further denote the eigenvectors and eigenvalues of $\mathbf{L}$ 
as $\{ \mathbf{u}_i \}_{i=1}^{N}$ and   $\{ \tilde{\lambda_i} \}_{i=1}^{N}$. 

\begin{proposition} The eigenvectors and eigenvalues of the Markov matrix obey the following: 

\begin{enumerate}
\item {\label{prop:1}
$\psi_i = \mathbf{D}^{-\frac{1}{2}} \mathbf{u}_i, \quad \lambda_i = 1 - \tilde{\lambda}_i.$}

\item {\label{prop:2}
$\vert \lambda_i \vert \le 1.$}

\item {\label{prop:5}
The leading eigenvector of the Markov matrix is constant.}

\end{enumerate}
\label{prop:connection}
\end{proposition}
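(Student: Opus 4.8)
The plan is to build all three claims on top of the similarity relation $\mathbf{P} = \mathbf{D}^{-\frac{1}{2}}(\mathbf{I}_N - \mathbf{L})\mathbf{D}^{\frac{1}{2}}$ established in the proof of Lemma 1. For the first claim, I would start from an eigenpair of the normalized Laplacian, $\mathbf{L}\mathbf{u}_i = \tilde{\lambda}_i \mathbf{u}_i$, which immediately gives $(\mathbf{I}_N - \mathbf{L})\mathbf{u}_i = (1 - \tilde{\lambda}_i)\mathbf{u}_i$, so $\mathbf{u}_i$ is an eigenvector of $\mathbf{I}_N - \mathbf{L}$ with eigenvalue $1 - \tilde{\lambda}_i$. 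Substituting the similarity relation and regrouping the leftmost factor $\mathbf{D}^{-\frac{1}{2}}$, I obtain
\begin{equation*}
\mathbf{P}\left(\mathbf{D}^{-\frac{1}{2}}\mathbf{u}_i\right) = \mathbf{D}^{-\frac{1}{2}}(\mathbf{I}_N - \mathbf{L})\mathbf{u}_i = (1 - \tilde{\lambda}_i)\,\mathbf{D}^{-\frac{1}{2}}\mathbf{u}_i,
\end{equation*}
which identifies $\psi_i = \mathbf{D}^{-\frac{1}{2}}\mathbf{u}_i$ and $\lambda_i = 1 - \tilde{\lambda}_i$. Since $\mathbf{L}$ is symmetric it has $N$ linearly independent eigenvectors, and $\mathbf{D}^{-\frac{1}{2}}$ is invertible, so this construction accounts for all $N$ eigenpairs of $\mathbf{P}$.

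For the second claim, rather than chasing the bound through $\mathbf{L}$ (which would require separately arguing $0 \le \tilde{\lambda}_i \le 2$), I would work directly with $\mathbf{P}$. The key observation is that $\mathbf{P} = \mathbf{D}^{-1}\mathbf{W}$ is row-stochastic: its entries are nonnegative and each row sums to $\sum_{j} W_{i,j}/d(v_i) = 1$ by the definition of the degree matrix $\mathbf{D}$. Applying the Gershgorin circle theorem, every eigenvalue $\lambda$ lies in a disc centered at some diagonal entry $P_{i,i} \ge 0$ with radius $\sum_{j \neq i} P_{i,j} = 1 - P_{i,i}$, so $|\lambda - P_{i,i}| \le 1 - P_{i,i}$ and hence $|\lambda| \le P_{i,i} + (1 - P_{i,i}) = 1$.

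For the third claim, I would note that row-stochasticity is exactly the statement $\mathbf{P}\mathbf{1} = \mathbf{1}$, so the constant all-ones vector is an eigenvector with eigenvalue $1$. By the second claim this eigenvalue attains the maximal possible magnitude, so $\mathbf{1}$ is a leading eigenvector. The point I expect to require the most care is arguing that the constant vector is the \emph{unique} leading eigenvector up to scaling: since $\mathbf{P}$ is a nonnegative irreducible matrix for a connected graph, I would invoke the Perron--Frobenius theorem to conclude that the eigenvalue $1$ is simple, so every eigenvector associated with it is a scalar multiple of $\mathbf{1}$ and is therefore constant. This irreducibility step is the only place where the connectedness hypothesis of Lemma 1 is genuinely used, and it is the main obstacle in the sense that the first two claims are purely algebraic while this one rests on a spectral-positivity argument.
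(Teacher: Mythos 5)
Your proof is correct, and it is worth comparing the three parts separately. Part 1 is essentially identical to the paper's argument: both start from the similarity $\mathbf{P} = \mathbf{D}^{-\frac{1}{2}}(\mathbf{I}_N - \mathbf{L})\mathbf{D}^{\frac{1}{2}}$ from Lemma 1 and push an eigenpair of $\mathbf{L}$ through it to get $\mathbf{P}\,\mathbf{D}^{-\frac{1}{2}}\mathbf{u}_i = (1-\tilde{\lambda}_i)\,\mathbf{D}^{-\frac{1}{2}}\mathbf{u}_i$ (your added remark that this exhausts all $N$ eigenpairs is a small bonus). Part 2 takes a genuinely different route: the paper deduces $\vert\lambda_i\vert \le 1$ from the correspondence $\lambda_i = 1 - \tilde{\lambda}_i$ combined with positive semi-definiteness of $\mathbf{L}$ (so $\tilde{\lambda}_i \ge 0$) and Chung's Rayleigh-quotient bound $\tilde{\lambda}_i \le 2$, whereas you apply Gershgorin directly to the row-stochastic matrix $\mathbf{P}$. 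Your argument is more elementary and self-contained (no appeal to an external bound on the Laplacian spectrum, and no dependence on part 1), while the paper's version reuses the eigenvalue map it has just established and stays entirely in the Laplacian picture; both are valid. Part 3 shares the same core step ($\mathbf{P}\mathbf{1} = \mathbf{1}$ plus the bound from part 2 shows the constant vector attains the maximal eigenvalue), but you add a Perron--Frobenius/irreducibility argument to show the eigenvalue $1$ is simple for a connected graph. The paper omits this; strictly speaking your extra step is what justifies calling the constant vector \emph{the} leading eigenvector rather than \emph{a} leading eigenvector, so your version is slightly more complete, and you correctly identify that this is the one place where connectedness is actually used.
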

\begin{proof}
\begin{enumerate}
\item 
In the proof of Lemma 1 we saw that $\mathbf{P}$ is similar to $\mathbf{I}_N - \mathbf{L}$. It follows that $\mathbf{L}$ is similar to $\mathbf{I} - \mathbf{P}$,
\begin{equation}
\mathbf{I}_N -\mathbf{P} =\mathbf{D}^{-\frac{1}{2}} \mathbf{L}  \mathbf{D}^{\frac{1}{2}},
\end{equation}
and that
\begin{equation}
 \left( \mathbf{I}_N -\mathbf{P} \right) \mathbf{D}^{-\frac{1}{2}} = \mathbf{D}^{-\frac{1}{2}} \mathbf{L}.
 \end{equation}

It follows that
\begin{equation}
\left( \mathbf{I}_N -\mathbf{P} \right) \mathbf{D}^{-\frac{1}{2}}  \mathbf{u}_i= \tilde{\lambda}_i \mathbf{D}^{-\frac{1}{2}} \mathbf{u}_i.
\end{equation}
Thus, $\mathbf{D}^{-\frac{1}{2}} \mathbf{u}_i$ is an eigenvector of $\mathbf{I}_N -\mathbf{P}$, with eigenvalue $\tilde{\lambda}_i$. 
Also, 
\begin{equation}
\left( \mathbf{I}_N -\mathbf{P} \right) \mathbf{D}^{-\frac{1}{2}}  \mathbf{u}_i=  \mathbf{D}^{-\frac{1}{2}}  \mathbf{u}  -\mathbf{P} \mathbf{D}^{-\frac{1}{2}}  \mathbf{u}_i =
\tilde{\lambda}_i \mathbf{D}^{-\frac{1}{2}} \mathbf{u}_i,
\end{equation}
so that
\begin{equation} \label{equ:UtoV}
\mathbf{P} \mathbf{D}^{-\frac{1}{2}} \mathbf{u}_i =
\left( 1 - \tilde{\lambda}_i \right) \mathbf{D}^{-\frac{1}{2}} \mathbf{u}_i.
\end{equation}
This proves that if $\mathbf{u}_i$ is an eigenvector of $\mathbf{L}$ with eigenvalue $\tilde{\lambda}_i$, then 
$\mathbf{D}^{-\frac{1}{2}} \mathbf{u}_i$ is an eigenvector of $\mathbf{P}$ with eigenvalue $1-\tilde{\lambda}_i$.

\item 

We know from Proposition 1 part 1  that
\begin{equation}\label{equ:lambda2}
  \lambda_i = 1 - \tilde{\lambda}_i.
\end{equation}
Let $\tilde{\lambda}_1$ be the smallest valued eigenvalue of the normalized graph Laplacian. As the normalized graph Laplacian is a positive semi definite matrix it follows that $\tilde{\lambda}_1 \ge 0$ and thus
$\lambda_1 \le 1$, where $\lambda_1$ is the largest eigenvalue of the Markov matrix.

Let $\tilde{\lambda}_N$ be the largest valued eigenvalue of the normalized graph Laplacian.
Chung \cite{chung1997book} used the rayleigh quotient to prove that $\tilde{\lambda}_N \le 2$.  
Therefore, $\lambda_N \ge -1$, where $\lambda_N$ is the smallest eigenvalues of the Markov matrix.

\item

To prove part 3, we note that each row in the Markov matrix sums to $1$. Thus,
\begin{equation}
\mathbf{P} \mathbf{1} = 1 \cdot \mathbf{1},
\end{equation}
where $\mathbf{1}$ is the all ones vector. We see that $1$ is an eigenvalue of $\mathbf{P}$, and is associated with a constant eigenvector. 
We know from part $2$ of the proposition that the eigenvalues are upper bounded by $1$. Therefore, the constant 
eigenvector must be the leading one.
\end{enumerate}
\vspace{-0.38cm}
\end{proof}

\section*{Appendix B}
\label{appendix:local_smoothness}
\begin{lemma} 
 Any 
solution of the system of equations~\eqref{equ:underdetermine} is guaranteed to be smooth over the one-hop neighborhoods of the nodes in $\mathcal{M}$.
\end{lemma}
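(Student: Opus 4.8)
The plan is to reduce the statement to a single pointwise identity: that the local error term of the Markovian model~\eqref{equ:consistModel2} vanishes at every sampled node, and then to observe that this error term is precisely the quantity measuring smoothness over a one-hop neighborhood. First I would recall that the smoothness of $\mathbf{s}$ at a node $v_j$ is captured by the $j$th entry of $\mathbf{s} - \mathbf{P}\mathbf{s}$: by~\eqref{equ:consistModel2} and~\eqref{generalMeasure} the error there is $\epsilon(s_j) = s_j - \sum_{m \in \mathcal{N}_j} P_{j,m} s_m = (\mathbf{s} - \mathbf{P}\mathbf{s})_j$. Because $P_{j,m}$ is nonzero only for $m \in \mathcal{N}_j$, this quantity aggregates the variation of $\mathbf{s}$ over exactly the edges incident to $v_j$, i.e. over the one-hop neighborhood of $v_j$. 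Hence it suffices to prove $(\mathbf{s} - \mathbf{P}\mathbf{s})_j = 0$ for every $j \in \mathcal{M}$.

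The second step is to translate the constraint~\eqref{equ:underdetermine} into this form. Using the diagonalization $\mathbf{P} = \mathbf{V}\mathbf{\Lambda}\mathbf{V}^{-1}$ (established in Appendix~A) together with $\hat{\mathbf{s}} = \mathbf{V}^{-1}\mathbf{s}$, the $j$th row of $\mathbf{V}\mathbf{\Lambda}$ satisfies $\Psi_1^T(j)\hat{\mathbf{s}} = (\mathbf{V}\mathbf{\Lambda}\mathbf{V}^{-1}\mathbf{s})_j = (\mathbf{P}\mathbf{s})_j$. Therefore each equation of~\eqref{equ:underdetermine}, namely $s_j = \Psi_1^T(j)\hat{\mathbf{s}}$ for $j \in \mathcal{M}$, is identical to $s_j = (\mathbf{P}\mathbf{s})_j$, so that $(\mathbf{s} - \mathbf{P}\mathbf{s})_j = 0$ on $\mathcal{M}$. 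Combined with the first step, this gives $\epsilon(s_j) = 0$ at every sampled node, which is exactly the claimed smoothness over its one-hop neighborhood. In fact~\eqref{equ:underdetermine} is simply the idealized local smoothness relation~\eqref{equ:smooth3} restricted to the rows indexed by $\mathcal{M}$, so once the interpretation of the error term is in place the lemma is essentially immediate.

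The main obstacle I anticipate is purely notational bookkeeping of the role of $\hat{\mathbf{s}}$: the columns $\lambda_i \psi_i$ appearing in~\eqref{equ:underdetermine} are the columns of $\mathbf{V}\mathbf{\Lambda}$, so the crucial identity $\Psi_1^T(j)\hat{\mathbf{s}} = (\mathbf{P}\mathbf{s})_j$ holds only when $\hat{\mathbf{s}}$ is read as the spectrum $\mathbf{V}^{-1}\mathbf{s}$ of the reconstructed signal. With this identification fixed, the chain $s_j = \Psi_1^T(j)\hat{\mathbf{s}} = (\mathbf{P}\mathbf{s})_j$ closes the argument with no further computation. I would finally remark that the result is \emph{exact} local smoothness because~\eqref{equ:underdetermine} is enforced as an equality; replacing it by the relaxed problem~\eqref{equ:sparseOpt} would instead yield $|\epsilon(s_j)| < \eta$, i.e. approximate smoothness, on $\mathcal{M}$.
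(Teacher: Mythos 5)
Your proposal is correct and follows essentially the same route as the paper's Appendix~B proof: both use the diagonalization $\mathbf{P} = \mathbf{V}\mathbf{\Lambda}\mathbf{V}^{-1}$ and the identification $\hat{\mathbf{s}} = \mathbf{V}^{-1}\mathbf{s}$ to rewrite the constraint rows of~\eqref{equ:underdetermine} as $\mathbf{s}_{\mathcal{M}} = \mathbf{P}(\mathcal{M})\mathbf{s}$, i.e.\ the vanishing of the Markov-variation error $\epsilon(s_j)$ at each sampled node. Your added observation that $P_{j,m}\neq 0$ only for $m\in\mathcal{N}_j$, so that a vanishing error at $v_j$ is precisely smoothness over its one-hop neighborhood, makes explicit a step the paper leaves implicit, but it is the same argument.
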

\begin{proof}
The right-hand side of~\eqref{equ:underdetermine} can be written as
\begin{equation}
\begin{bmatrix} 
\lambda_1 \psi_1 \left( \mathcal{M} \right) & \cdots & \lambda_N \psi_N \left( \mathcal{M}  \right)\\
\end{bmatrix} 
\hat{\mathbf{s}} = \mathbf{V} \left( \mathcal{M} \right) \mathbf{\Lambda} \hat{\mathbf{s}},
\end{equation}
where $ \mathbf{V}$ is the matrix of eigenvectors of the graph's Markov matrix $\mathbf{P}$, $\mathbf{V} \left( \mathcal{M} \right) $ 
are the rows of $\mathbf{V}$ corresponding to the indices of the sampled nodes, and $\mathbf{\Lambda}$ is the diagonal matrix of eigenvalues 
of the Markov matrix. Using the graph Fourier transform~\eqref{equ:gft}, we get
\begin{equation}
\mathbf{V} \left( \mathcal{M} \right) \mathbf{\Lambda}  \hat{\mathbf{s}} = \mathbf{V} \left( \mathcal{M} \right) \mathbf{\Lambda} \mathbf{V}^{-1} \mathbf{s}.
\end{equation}
As $\mathbf{P} = \mathbf{V}\mathbf{\Lambda} \mathbf{V}^{-1}$, clearly, 
\begin{equation}
 \mathbf{V} \left( \mathcal{M} \right) \mathbf{\Lambda} \mathbf{V}^{-1} \mathbf{s} = \mathbf{P} \left( \mathcal{M} \right) \mathbf{s}.
\end{equation}
Therefore,~\eqref{equ:underdetermine} is equivalent to the following system
\begin{equation}
\mathbf{s}_{\mathcal{M}} = \mathbf{P} \left( \mathcal{M} \right) \mathbf{s},
\end{equation}
and is satisfied only if the graph signal is smooth in the neighborhood of the sampled nodes.
\end{proof}

\section*{Appendix C}
In Section \ref{subsec:bigData} we modified the Nystr\"{o}m extension from 
\begin{equation} 
\label{equ:appendix1}
\tilde{\mathbf{Z}} = \begin{bmatrix} \mathbf{Z} \\
\mathbf{B} \mathbf{Z} \mathbf{Q}^{-1}
\end{bmatrix}
\end{equation}
to 
\begin{equation} 
\label{equ:appendix2}
\tilde{\mathbf{Z}} = \begin{bmatrix} \mathbf{Z} \\
\mathbf{B} \mathbf{Z}
\end{bmatrix}.
\end{equation}
In order to justify this, we examine the approximation of the eigenvectors of $\mathbf{P}$ which can be found using~\eqref{equ:appendix1},
\begin{equation} 
\label{equ:appendix3}
\mathbf{D}^{-\frac{1}{2}} \tilde{\mathbf{Z}} = \mathbf{D}^{-\frac{1}{2}} \begin{bmatrix} \mathbf{Z} \\
\mathbf{B} \mathbf{Z} \mathbf{Q}^{-1}
\end{bmatrix}.
\end{equation}
We decompose the diagonal matrix $\mathbf{D}$ as 
\begin{equation*} 
{\mathbf{D}} = \begin{bmatrix} \mathbf{D}_e & \mathbf{0} \\ \mathbf{0} & 
\mathbf{D}_b 
\end{bmatrix},
\end{equation*}
where $\mathbf{D}_e \in \mathbb{R}^{r \times r}$.  Substituting  into~\eqref{equ:appendix3}, the approximation 
of the eigenvectors of $\mathbf{P}$ can be expressed as
\begin{equation} 
\label{equ:appendix4}
\mathbf{D}^{-\frac{1}{2}} \tilde{\mathbf{Z}} =  \begin{bmatrix} \mathbf{D}_e^{-\frac{1}{2}} \mathbf{Z} \\
 \mathbf{D}_b^{-\frac{1}{2}} \mathbf{B} \mathbf{Z} \mathbf{Q}^{-1}
\end{bmatrix}.
\end{equation}
We further denote $\mathbf{A} = \mathbf{D}_b^{-\frac{1}{2}} \mathbf{B} \mathbf{Z}$ and examine $\mathbf{A} \mathbf{Q}^{-1}$. 
As $\mathbf{Q}$ is a diagonal matrix, 
\begin{equation} 
\label{equ:appendix5}
\mathbf{A} \mathbf{Q}^{-1} = \begin{bmatrix}
\frac{\mathbf{a}_1}{Q_{1,1}} & \frac{\mathbf{a}_2}{Q_{2,2}} & \cdots \frac{\mathbf{a}_r}{Q_{r,r}}
\end{bmatrix},
\end{equation}
where $\mathbf{a}_i$ denotes the $i$th column of $\mathbf{A}$. That is, rows $r+1$ through $N$ of  the $i$th eigenvector of $\mathbf{P}$ are multiplied by the inverse 
of the $i$th eigenvalue of $\mathbf{E}$~\eqref{equ:structure}. 

{We assume} that $\mathcal{M} = \{1, \dots, r \}$. That is, since the numbering of nodes is arbitrary, when creating the graph shift we 
 assign the first $r$ rows and $r$ columns to  the sampled nodes. In this case, 
the optimization problem we solve is
\begin{multline} \label{equ:appendix6}
{\mathbf{x}} = \arg \underset{\mathbf{y}}{\min} \Vert \mathbf{y} \Vert_0 \quad \text{such that} \\
\begin{bmatrix} 
\left( 1-Q_{1,1}\right) {\mathbf{z}}_1 &  \cdots &\left( 1-Q_{r,r}\right) {\mathbf{z}}_r 
\end{bmatrix} 
\mathbf{y} = \mathbf{s}_{\mathcal{M}},
\end{multline}
where ${\mathbf{z}}_i$ is the $i$th column of $\mathbf{D}_e^{-\frac{1}{2}} \mathbf{Z}$. We note that 
the optimization problem~\eqref{equ:appendix6} depends only upon the first $r$ rows 
of $\mathbf{D}^{-\frac{1}{2}} \tilde{\mathbf{Z}}$. As those rows are unaffected by the approximation~\eqref{equ:appendix2}, it is clear that 
the approximation does not affect the solution $\mathbf{y}$.

In addition, we have found that for smooth graph signals, the entries in the spectrum of the signal 
that correspond to the lower valued eigenvalues of $\mathbf{P}$ are negligible. This means that for most of the eigenvectors of the Laplacian, 
it makes no difference how we approximate their eigenvalues since they will be ignored in the interpolation process. The eigenvectors that 
are not ignored correspond to the higher valued eigenvalues of $\mathbf{P}$. So, in essence, the approximation of~\eqref{equ:appendix2} 
just means that we assume that the higher valued eigenvalues of $\mathbf{P}$ are approximately equal. While this assumption is not strictly 
correct, it does prevent the low eigenvalues of the laplacian (which correspond to the high eigenvalues of $\mathbf{P}$) from causing numerical 
instabilities.

\bibliographystyle{ieeetran}
\bibliography{SPG}
}

\end{document}